\newtheorem{proposition}{Proposition}[section]
\newenvironment{proof}[1][Proof]{\textbf{#1.} }{\ \rule{0.5em}{0.5em}}
\def\R{{\bar{R}}}
\def\X{{\cal X}}
\def\Y{{\cal Y}}
\def\Z{{\cal Z}}
\def\e{\epsilon}
\begin{document}
\pdfoutput=1
\title{Hybrid classifiers of pairwise Markov models}
\author{Kristi Kuljus\footnote{University of Tartu, Estonia; e-mail: kristi.kuljus@ut.ee}, J\"{u}ri Lember\footnote{University of Tartu, Estonia;  e-mail: jyri.lember@ut.ee}}
\maketitle
\begin{abstract} The article studies segmentation problem (also known as classification problem) with pairwise Markov models (PMMs). A PMM is a process where the observation process and underlying state sequence form a two-dimensional Markov chain, it is a natural generalization of a hidden Markov model. To demonstrate the richness of the class of PMMs, we examine closer a few examples of rather different types of PMMs: a model for two related Markov chains, a model that allows to model an  inhomogeneous Markov chain as a homogeneous one and a semi-Markov model. The segmentation problem assumes that one of the marginal processes is observed and the other one is not, the problem is to estimate the unobserved state path given the observations. The standard state path estimators often used are the so-called Viterbi path (a sequence with maximum state path probability given the observations) or the pointwise maximum a posteriori (PMAP) path (a sequence that maximizes the conditional state probability for given observations pointwise). Both these estimators have their limitations, therefore we derive formulas for calculating the so-called hybrid path estimators which interpolate between the PMAP and Viterbi path. We apply the introduced algorithms to the studied models in order to demonstrate the properties of different segmentation methods, and to illustrate large variation in behaviour of different segmentation methods in different PMMs. The studied examples show that a segmentation method should always be chosen with care by taking into account the particular model of interest.

\textbf{Keywords}: pairwise Markov model, segmentation, classification, Viterbi path, PMAP path, hybrid path.
\end{abstract}
\section{Introduction}
\subsection{Pairwise Markov models}
Let $\X$ and $\Y$ be discrete sets and let
$\{Z_t\}_{t=1}^{\infty}=\{(X_t,Y_t)\}_{t=1}^{\infty}$ be a
homogeneous Markov chain taking values in $\Z\subset \X\times \Y$.
Here the state space  $\Z$  can be a proper subset of $\X\times \Y$. Following the terminology proposed by W. Piezcynski (see, e.g. \cite{P03,P04,Pzebra,Psemi, P12, P13,gorynin18}), we call the
process $Z=(X,Y)$ a {\it pairwise Markov chain} or a {\it pairwise
Markov model (PMM)}. The name reflects the fact that although the
processes $X$ or $Y$ might lack the Markov property, conditionally
on $X$ (or on $Y$) the process $Y$ (or $X$) is an inhomogeneous Markov
chain (see Proposition 2.1 in \cite{P03}). It turns out that the two-dimensional structure makes PMMs very useful and flexible allowing to consider many stochastic models as a homogeneous Markov chain. In Section \ref{sec:examples}, we present a few examples of rather different PMMs. The first example in Subsection \ref{sec:milano} presents a  parametric class of PMMs, where both marginal processes are Markov chains with given transition matrices and the parameters allow to model dependence structure between the marginal chains.  The property that both marginal processes are Markov chains is rather untypical for PMMs, because usually at least one of the marginal processes does not have the Markov property. However, incorporating two Markov chains into one  might be useful and in many respects that model is very special.
Our second example in Subsection \ref{sec:regime}  -- a {\it regime-switching model} -- allows to consider an inhomogeneous Markov chain as a PMM (and hence as a homogeneous chain). In particular, suppose that $X$ is a stochastic process that in a certain random time-period behaves as a  homogeneous Markov chain, but then the transition matrix changes. After the change, $X$ evolves again as a Markov chain but now with another transition matrix, and after a certain random time-period the matrix changes again. Such a model can be considered as a PMM $(X,Y)$, where the $Y$-process is a Markov chain that governs the time periods -- {\it regimes} -- for different transition matrices and the $X$-process is the observed one. We argue that given a realization of $Y$, $X$ is an inhomogeneous Markov chain, but unconditionally it lacks the Markov property. The third example in Subsection \ref{sec:semi-markov} allows to consider a semi-Markov process as a PMM and the fourth example in Subsection \ref{sec:semi-regime} combines the regime-switching model and semi-Markov model into one. In the regime-switching model the regime process $Y$ is a Markov chain, thus the times $Y$ spends in a particular regime are geometrically distributed. Replacing $Y$ by a semi-Markov PMM, let it be $(U,V)$, allows us to generalize the regime-switching model so that the inter-regime times don't have necessarily geometrical distributions. Hence the resulting model $(X,(U,V))$ is actually a three-dimensional Markov chain known as a {\it triplet Markov model}, see e.g. \cite{P04b,P07,Pzebra,Psemi, P12,gorynin18}.
\subsection{Segmentation problem}
Suppose that a researcher has a realization $x_1,\ldots,x_n$ of observations $X_1,\ldots,X_n$ and the objective is to estimate the unobserved class variables $Y_1,\ldots,Y_n$. We shall call estimation of unobserved class variables a {\it segmentation problem}  (also known as classification, denoising or decoding). For dependent observations a classical latent variable model often considered in unsupervised learning is a {\it hidden Markov model} (HMM), where $Y$ is a Markov chain and given $Y_1,\ldots,Y_n$, the observations $X_1,\ldots,X_n$ are conditionally independent. In HMMs, the distribution of $X_t$ depends solely on $Y_t$ (sometimes the dependence on $Y_{t-1}$ is allowed as well).
As the examples in Section \ref{sec:examples} illustrate, the class of PMMs is much larger compared to HMMs, allowing also for conditionally dependent observations and for $Y$ that is not a Markov chain. Therefore, in this article we study different segmentation methods for this rich class of models. \\\\
In Section \ref{sec:segmentation} we briefly recall the decision-theoretical foundations of segmentation theory. The standard solutions of the segmentation problem are either a state sequence that maximizes the conditional probability $P(Y_1=y_1,\ldots,Y_n=y_n|X_1=x_1,\ldots,X_n=x_n)$ over all sequences $(y_1,\ldots,y_n)$ -- the so-called {\it Viterbi path} -- or a sequence which maximizes the probability $P(Y_t=y|X_1=x_1,\ldots,X_n=x_n)$ over all possible states $y$ for every $t\in\{1,\ldots,n\}$ separately. This pointwise estimator is called the {\it pointwise maximum a posteriori (PMAP) path}. Since the Viterbi and forward-backward algorithms apply for any PMM, both the Viterbi and PMAP path can be easily found. However, both these solutions have their limitations. The PMAP path is guaranteed to maximize the expected number of correctly estimated classes, but it might have a zero (conditional) probability, hence it can be {\it inadmissible}. The Viterbi path, on the other hand, might be rather inaccurate in terms of pointwise classification errors. These deficiencies are well known in the literature and were pointed out already by L. Rabiner in his seminal tutorial \cite{rabiner}. As a remedy against inadmissible PMAP paths, he proposed to replace the PMAP path with a state path that maximizes the expected number of correctly estimated state blocks of length $k$, $k\geq 2$, we shall call these state path estimators {\it Rabiner $k$-block paths}. Since inadmissibility is mostly caused due to impossible transitions in the model and any impossible transition automatically results in a wrongly estimated block, it is natural to hope that any $k$-block path should minimize such impossible  transitions, especially for larger $k$. However, Rabiner $k$-block paths can still be inadmissible and this might happen quite easily. Often various counterexamples consider two-block paths, but our example in Subsection \ref{sec:example3} shows that even $5$-block paths can be inadmissible. \\\\
To deal with the problem of inadmissibility more efficiently, a family of {\it hybrid path estimators} was defined in \cite{seg}. The idea behind the hybrid paths mimics partly the Rabiner $k$-block idea, but instead of block length $k$, a hybrid path depends on a regularization  parameter $C\geq 0$ and any hybrid path is guaranteed to be admissible. We shall see in Section \ref{sec:segmentation} that when $C=k-1$, then for smaller  integer values $k$ the corresponding hybrid path can be considered as an analogue of the $k$-block path. Hybrid paths interpolate between PMAP and Viterbi paths: for $C=0$ the hybrid path equals the PMAP path and from certain $C$ on the hybrid path equals the Viterbi path. This means that there exists a constant $C_o$ such that the hybrid path becomes Viterbi when $C>C_o$. The same holds for the Rabiner $k$-block paths: when $k\geq n$, there is only one block to optimize and the solution is Viterbi. By intuition one might expect the same property to hold for hybrid paths --  that for $C>n$ any hybrid path is Viterbi. This turns out to be very wrong, because for hybrid paths the block interpretation does not work for large integers $k$, besides the critical parameter $C_o$ depends both on model and observations. In Subsection \ref{sec:example1} we present an example showing the extreme instability of $C_o$. In that example we can see that adding just three more observations to a particular observation sequence might increase $C_o$ unboundedly. But on the other hand,  for the same model a typical $C_o$ estimated with generated data is very small. Another study in Subsection \ref{sec:example3} shows that $C_o$ might be much larger than the sample size $n$. All examples in Section \ref{sec:simulations} demonstrate that a segmentation method should be chosen with care by taking into account the particular model studied. \\\\
In \cite{seg}, it was shown that for HMMs the hybrid paths can be easily found by a dynamic programming algorithm that combines the Viterbi algorithm and the forward-backward recursions. In Proposition \ref{prop1} we show that the same algorithm holds for PMMs so that all the hybrid paths can be found with complexity $O(|\Y|n)$, the complexity is independent of $C$. The algorithm for the Rabiner $k$-block paths has complexity $O(|\Y|^{k-1}n)$ (Proposition \ref{prop2}). This means that for bigger blocks Rabiner $k$-block paths are very time-consuming to calculate. To illustrate the performance of hybrid and Rabiner $k$-block paths, in Section \ref{sec:simulations} both algorithms are applied to data generated from different PMMs.
\\\\
Throughout the article we assume that the observation alphabet $\X$ is discrete. The reason for making this assumption is to reduce mathematical technicalities. In real segmentation problems $\X$ is often uncountable, say $\X=\mathbb{R}^d$. All ideas and algorithms of this note carry on to the uncountable $\X$ as well, just the notations and proofs would be more technical. For formal definition of PMMs, the  Viterbi algorithm and related concepts in the case of general $\X$, see \cite{sova1,sova2,sova3}.
%
%
%
\section{Preliminaries}
Recall that $\X$ and $\Y$ are discrete sets and
$\{Z_t\}_{t=1}^{\infty}=\{(X_t,Y_t)\}_{t=1}^{\infty}$ is a homogeneous Markov chain taking values in $\Z\subset \X\times \Y$.
In what follows, we shall denote by $p$ various probabilities like
transition probabilities
$p(z_{t+1}|z_t)=p(x_{t+1},y_{t+1}|x_t,y_t)=P(X_{t+1}=x_{t+1},Y_{t+1}=y_{t+1}|X_{t}=x_{t},Y_{t}=y_{t})$,
joint distributions $p(x_s,\ldots,x_t)=P(X_s=x_s,\ldots,X_t=x_t)$
etc. We use abbreviation $x_s^t=(x_s,\ldots,x_t)$ and when $s=1$, we write $x^t$ instead of $x_1^t$. We denote
$p_t(y|x^n)=P(Y_t=y|X^n=x^n)$.  As usual in a discrete setting, any conditional probability implies that the probability of the condition is strictly positive. Sometimes we abuse the notation a bit by writing
$p(x_{t+1}=x|x_t=x')$ instead of $P(X_{t+1}=x|X_t=x')$ etc.
\\\\
Pairwise Markov models is a large class of stochastic models that can be classified via the properties of transition probabilities $p(z_{t+1}|z_t)$. When the transition probabilities factorize as
$$p(z_{t+1}|z_t)=p(y_{t+1}|y_t)p(x_{t+1}|y_{t+1}),$$
then we have an HMM. Clearly  HMMs is a very narrow subclass of PMMs.
A broader  subclass of PMMs is the class of {\it Markov switching models}
(see \cite{HMM-book}), where the observations are not conditionally independent any more:
$$p(z_{t+1}|z_t)=p(y_{t+1}|y_t)p(x_{t+1}|y_{t+1},x_{t}).$$
Thus, HMMs is a special case of Markov switching models. A more general class of models where $Y$ is a Markov chain is the class of {\it hidden Markov models with dependent noise} (HMM-DN), see \cite{P13}, where the transition probabilities
factorize as follows:
\begin{equation}\label{HMM-DN}
p(z_{t+1}|z_t)=p(y_{t+1}|y_t)p(x_{t+1}|y_{t+1}, y_t,x_{t}).\end{equation}
Hence the class of Markov switching models is a special class of HMM-DNs.
Observe that  (\ref{HMM-DN}) is  equivalent to
\begin{equation}\label{HMM-DN2}p(y_{t+1}|z_t)=p(y_{t+1}|y_t).
\end{equation}
Since (\ref{HMM-DN}) is an important property, we shall examine it a bit closer. Suppose
${\cal Z}={\cal X}\times {\cal Y}$. Let $|{\cal X}|=k\leq \infty$ and $|{\cal Y}|=l\leq \infty$  and let the $kl$ elements of ${\cal X}\times {\cal Y}$ be ordered as follows:
$$\X\times \Y=\{(\chi_1,\gamma_1),\ldots,(\chi_k,\gamma_1),(\chi_1,\gamma_2),\ldots,(\chi_k,\gamma_2),\ldots,(\chi_1,\gamma_l),\ldots, (\chi_k,\gamma_l)\}.$$
Then $Z$ is an HMM-DN if and only if the $kl\times kl$ transition matrix factorizes as follows:
\begin{equation}\label{maatriks}\left(
  \begin{array}{cccc}
    p_{11}\cdot A_{11}& p_{12}\cdot A_{12} & \ldots & p_{1l}\cdot A_{1l} \\
     p_{21}\cdot A_{21}& p_{22}\cdot A_{22} & \ldots & p_{2l}\cdot A_{2l} \\
    \cdots & \cdots & \cdots & \cdots \\
     p_{l1}\cdot A_{l1}& p_{l2}\cdot A_{l2} & \ldots & p_{ll}\cdot A_{ll} \\
  \end{array}
\right),
\end{equation}
where  $p_{ij}=P(Y_{t+1}=\gamma_j|Y_t=\gamma_i)$ and $A_{ij}$ are the following transition matrices:
$$A_{ij}=\left(
  \begin{array}{ccc}
    a^{(ij)}_{11} & \cdots  & a^{(ij)}_{1k} \\
    \cdots & \cdots & \cdots \\
     a^{(ij)}_{k1} & \cdots &  a^{(ij)}_{kk} \\
  \end{array}
\right), \quad a^{(ij)}_{u v}=P(X_{t+1}=\chi_v|Y_{t+1}=\gamma_j,Y_{t}=\gamma_i, X_t=\chi_u).$$
If $Z$ is a Markov switching model, then the probability $a_{uv}^{(ij)}$ is independent of $i$ and then in (\ref{maatriks}),
$$A_{ij}=A_j, \quad A_j=(a^{(j)}_{uv}),\quad a^{(j)}_{uv}=P(X_{t+1}=\chi_v|Y_{t+1}=\gamma_j, X_t=\chi_u).$$
When $Z$ is an HMM, then $a^{(ij)}_{uv}$ is independent of both $i$ and $u$ implying that all the rows in $A_j$ are equal.\\\\
The property in (\ref{HMM-DN2}) that defines HMM-DNs  and holds for all their subclasses has important implications. Since it obviously implies that
$p(y_{t+1}^n|z_t)=p(y_{t+1}^n|y_t)$, we see that the conditional distribution of $x_t$ given the whole sequence $y^n$ depends only on $y^t$:
\begin{equation}\label{alfa}
p(x_t|y^n)={p(x_t,y^t)p(y_{t+1}^n|x_t,y_t)\over p(y^t)p(y_{t+1}^n|y_t)}={p(x_t,y^t)p(y_{t+1}^n|y_t)\over p(y^t)p(y_{t+1}^n|y_t)} = p(x_t|y^t).\end{equation}
If the model happens to be such that (\ref{HMM-DN2}) holds for the time-reversed chain, that is $p(y_{t-1}|z_t)=p(y_{t-1}|y_t)$, then clearly $p(x_t|y^n)=p(x_t|y_t^n)$.
Therefore, if the model is HMM-DN and the time-reversed chain is HMM-DN as well, then it must hold that
$p(x_t|y^t)=p(x_t|y^n)=p(x_t|y_t^n)$, implying that $p(x_t|y^n)=p(x_t|y_t)$. Hence the conditional distribution of $x_t$ given the whole sequence $y^n$ depends solely on $y_t$. Any HMM has this particular property, but in the case of HMMs in addition the $X$-variables are conditionally independent of $Y$, that is $p(x^n|y^n)=\prod_{t=1}^n p(x_t|y_t)$. Another implication of (\ref{HMM-DN2}) is that the conditional transition probabilities $p(x_{t+1}|x_t,y^n)$ depend only on $y_t$ and $y_{t+1}$, but not on $y_{t+2}^n$. Indeed, when (\ref{HMM-DN2}) holds, then
\begin{equation}\label{tr2}
p(x_{t+1}|x_t,y^n)={p(x_{t+1},x_t,y^n)\over p(x_t,y^n)}={p(x_t,y^t)p(z_{t+1}|z_t)p(y_{t+2}^n|y_{t+1})\over p(x_t,y^t)p(y_{t+1}|z_t)p(y_{t+2}^n|y_{t+1},z_t)}=p(x_{t+1}|x_t,y_t,y_{t+1}),
\end{equation}
because  $p(y_{t+2}^n|y_{t+1},z_t)=\sum_{x_{t+1}}p(y_{t+2}^n|x_{t+1},y_{t+1},z_t)p(x_{t+1}|y_{t+1},z_t)=p(y_{t+2}^n|y_{t+1}).$ Without (\ref{HMM-DN2}) this probability might depend on $y_{t+2}^n$, i.e. for any PMM,   $p(x_{t+1}|x_t,y^n)=p(x_{t+1}|x_t,y_t^n)$.
\paragraph{Preserving the Markov property.}
A question raised already in the very first papers about PMMs \cite{P03}
was when is the $Y$-process unconditionally a Markov chain. If it is, then a PMM can be considered as a direct generalization of HMMs.
It is easy to see that when $Z$ is an HMM-DN, then $Y$ is a Markov chain. Indeed, under (\ref{HMM-DN2}) we have for any $t\geq 2$:
$$p(y_{t+1}|y^t)=\sum_{x_t\in \X}p(y_{t+1}|x_t,y^t)p(x_t|y^t)=\sum_{x_t\in \X}p(y_{t+1}|x_t,y_t)p(x_t|y^t)=\sum_{x_t\in \X}p(y_{t+1}|y_t)p(x_t|y^t)=p(y_{t+1}|y_t).$$
Here the second equality follows from the Markov property and the third equality follows from (\ref{HMM-DN2}). Hence the matrix $(p_{ij})$ in  representation (\ref{maatriks}) is the transition matrix of the Markov chain $Y$.
Since a Markov switching model is a special case of HMM-DNs, then under that model $Y$ is a Markov chain as well. It has been an open question whether being an HMM-DN is also a necessary property for $Y$ being a Markov chain.
The following proposition shows that under a special condition it is indeed so.
 \begin{proposition}\label{propHMMDN} Let $t$ be fixed and let  $(y_{t-1},y_t,y_{t+1})$ be a state triplet such that  $y_{t+1}=y_{t-1}$ and $p(x_t|y_t,y_{t-1})=p(x_t|y_t,y_{t+1})$ for every $x_t$.
 Then the Markov property $p(y_{t+1}|y^t)=p(y_{t+1}|y_t)$ implies that (\ref{HMM-DN}) holds.
\end{proposition}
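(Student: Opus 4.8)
The plan is to establish the equivalent condition (\ref{HMM-DN2}) at the given triplet, i.e. to show that $p(y_{t+1}|x_t,y_t)=p(y_{t+1}|y_t)$ for every $x_t$ with $p(x_t|y_t)>0$. To keep the bookkeeping light, write $\gamma:=y_{t-1}=y_{t+1}$ and $a:=y_t$, and set $f(x_t):=p(y_{t+1}|x_t,y_t)$, $g(x_t):=p(x_t|y_t)$ and $c:=p(y_{t+1}|y_t)$. With this notation the goal is exactly that $f$ is identically equal to $c$ on the support of $g$.

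First I would extract a consistency relation from the Markov assumption. The assumed Markov property of $Y$ gives in particular $p(y_{t+1}|y_{t-1},y_t)=p(y_{t+1}|y_t)=c$. Expanding the left-hand side over $x_t$ and using the Markov property of the pair chain $Z$ — which lets me drop $y_{t-1}$ from $p(y_{t+1}|x_t,y_{t-1},y_t)=p(y_{t+1}|x_t,y_t)=f(x_t)$ — together with the same expansion of the right-hand side, I obtain
$$\sum_{x_t}f(x_t)\,p(x_t|y_{t-1},y_t)=c=\sum_{x_t}f(x_t)\,g(x_t).$$
The second ingredient is to convert the hypothesis $p(x_t|y_t,y_{t-1})=p(x_t|y_t,y_{t+1})$ into a statement about $f$ and $g$. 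The conditional on the right conditions on the future state $y_{t+1}=\gamma$, so Bayes' formula gives $p(x_t|y_t,y_{t+1})=f(x_t)g(x_t)/c$; hence the hypothesis reads $p(x_t|y_{t-1},y_t)=f(x_t)g(x_t)/c$. Substituting this into the left-hand sum of the consistency relation produces the two moment identities $\sum_{x_t}f(x_t)^2 g(x_t)=c^2$ and $\sum_{x_t}f(x_t)g(x_t)=c$.

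The decisive step, which I expect to be the actual content rather than routine manipulation, is to read these two identities as a vanishing-variance statement for the random variable $f(X_t)$ under the conditional law $g$ of $X_t$ given $Y_t=a$: they assert $\sum_{x_t}f(x_t)^2 g(x_t)=c^2=\big(\sum_{x_t}f(x_t)g(x_t)\big)^2$, so the second moment equals the square of the mean and the variance of $f(X_t)$ is zero. Consequently $f(x_t)=c$ for all $x_t$ with $g(x_t)>0$, which is (\ref{HMM-DN2}) and hence (\ref{HMM-DN}) at this triplet. The point to be careful about — and the reason the two hypotheses are not vacuous despite $y_{t-1}=y_{t+1}$ — is that the two conditionals in the assumption condition on $Y$ at different times, so the equality $p(x_t|y_t,y_{t-1})=p(x_t|y_t,y_{t+1})$ is a genuine time-reversal-type constraint; getting its Bayes rewriting and the direction of the substitution right is precisely what makes the variance identity fall out.
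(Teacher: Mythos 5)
Your proof is correct, and up to the decisive identity it shares the paper's skeleton: both arguments expand $p(y_{t+1}|y_{t-1},y_t)$ over $x_t$, use the Markov property of $Z$ to replace $p(y_{t+1}|x_t,y_{t-1},y_t)$ by $p(y_{t+1}|x_t,y_t)$, and apply Bayes to the pair $(y_t,y_{t+1})$. Indeed, writing $p(x_t|y_t,y_{t+1})=f(x_t)g(x_t)/c$ and dividing your moment identity $\sum_{x_t}f(x_t)^2g(x_t)=c^2$ by $c^2$ recovers exactly the paper's equation (\ref{int}), namely $\sum_{x_t}\frac{p(x_t|y_t,y_{t+1})}{p(x_t|y_t)}\,p(x_t|y_t,y_{t-1})=1$. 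Where you genuinely diverge is the closing lemma. The paper reads (\ref{int}) as $\sum_{x}\bigl(p(x)/q(x)\bigr)p(x)=1$ for two distributions and invokes Jensen's inequality for the logarithm, i.e.\ nonnegativity of the Kullback--Leibler divergence together with its equality case, to force $p=q$. You instead read the same identity as $E_g[f(X_t)^2]=\bigl(E_g[f(X_t)]\bigr)^2$ and conclude by vanishing variance --- equivalently by the equality case of Cauchy--Schwarz, or by vanishing of the $\chi^2$-divergence, since $\mathrm{Var}_g(f)=c^2\sum_{x}\bigl(p(x)-q(x)\bigr)^2/q(x)$. Your route is marginally more elementary: it needs no logarithms, so it sidesteps the $0\cdot\ln 0$ convention and the absolute-continuity check ($q(x)=0\Rightarrow p(x)=0$) implicit in the KL step, and it makes transparent that the conclusion $f\equiv c$ holds precisely on the support of $g$, which is all that (\ref{HMM-DN2}) --- and hence (\ref{HMM-DN}) --- can assert under the paper's convention on conditional probabilities. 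Both proofs are instances of the equality case of a strictly convex Jensen bound ($x\mapsto x^2$ for you, $x\mapsto-\ln x$ for the paper), so they have the same depth and scope; in particular, like the paper's, your argument establishes (\ref{HMM-DN}) locally at the given triplet, which is exactly what the proposition claims.
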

\begin{proof} Take $(y_{t-1},y_t,y_{t+1})$ such that $p(y_{t-1},y_t,y_{t+1})>0$ and $y_{t+1}=y_{t-1}$. The Markov property implies $p(y_{t+1}|y_t,y_{t-1})=p(y_{t+1}|y_t)$. Then
\begin{equation*}
p(y_{t+1}|y_t,y_{t-1})=\sum_{x_t\in \X} p(y_{t+1}|x_t,y_{t},y_{t-1})p(x_t|y_{t},y_{t-1})=\sum_{x_t\in \X} p(y_{t+1}|x_t,y_{t})p(x_t|y_{t},y_{t-1})
\end{equation*}
\begin{equation*}
=\sum_{x_t\in \X} p(y_{t+1}|y_{t}){p(x_t|y_t,y_{t+1})\over p(x_t|y_t)}
p(x_t|y_{t},y_{t-1})=p(y_{t+1}|y_{t})\sum_{x_t\in \X}{p(x_t|y_t,y_{t+1})\over p(x_t|y_t)}p(x_t|y_{t},y_{t-1}).
\end{equation*}
Hence, the equality $p(y_{t+1}|y_t,y_{t-1})=p(y_{t+1}|y_t)$ implies that
\begin{equation}\label{int}\sum_{x_t\in \X}{p(x_t|y_t,y_{t+1})\over p(x_t|y_t)}p(x_t|y_{t},y_{t-1})=1.\end{equation}
By the assumption, $p(x_t|y_t,y_{t+1})=p(x_t|y_{t},y_{t-1})$ for every $x_t$. We now show the following implication: if $p$ and $q$ are probability measures on $\X$ so that
$$\sum_{x\in \X} {p(x)\over q(x)}p(x)=1,$$
then $p(x)=q(x)$ for every $x$.  It follows from Jensen's inequality that
\begin{equation}\label{j}
\ln \Big(\sum_{x\in \X} {p(x)\over q(x)}p(x)\Big)\geq \sum_{x\in \X} \ln \big({p(x)\over q(x)}\big)p(x),\end{equation}
the equality holds if and only if $p(x)=q(x)$ for every $x\in \X$. The right-hand side of (\ref{j}) is the Kullback-Leibler divergence between $p$ and $q$, hence non-negative. The left-hand side is 0 by the assumption. Thus it follows that $p=q$. Therefore also $p(x_t|y_t,y_{t+1})=p(x_t|y_t)$ for every $x_t$ and multiplying both sides by $p(y_t,y_{t+1})$ gives
$$p(y_{t+1}|x_t,y_t)p(x_t,y_t)=p(x_t,y_t,y_{t+1})=p(x_t|y_t)p(y_t,y_{t+1})=p(x_t,y_t)p(y_{t+1}|y_t),$$
thus (\ref{HMM-DN}) holds.
\end{proof}\\\\
Proposition \ref{propHMMDN} is Proposition 2.3 in \cite{P03}, but the proof there does not use Jensen's inequality and is therefore more complicated. The assumptions of Proposition \ref{propHMMDN} are satisfied when $Z$ is a stationary and reversible Markov chain. Thus, for a stationary and reversible $Z$, the marginal process $Y$ is Markov if and only if $Z$ is an HMM-DN. However, the assumption of Proposition \ref{propHMMDN} might hold also when $Z$ is not reversible and  we shall present an example of a PMM where $Y$ is a Markov chain, but $Z$ is not an HMM-DN (Subsection \ref{sec:milano}). This  example shows that the HMM-DN property is not necessary for $Y$ being a Markov chain.
\section{Some examples of discrete PMMs} \label{sec:examples}
\subsection{The related Markov chains model}\label{sec:milano}
 Usually in pairwise Markov models the $X$-process is not a Markov chain even if $Y$ is. Indeed, even for an HMM, the observation process $X$ has a long memory and is typically not a Markov chain. The PMM considered in this section is, however, deliberately constructed so that besides $Y$ also the $X$-chain were a Markov chain. In particular, the model is an HMM-DN (so that $Y$ is a Markov chain) and it is also an HMM-DN when the roles of $X$ and $Y$ are changed. In what follows, the latter property shall be referred to as {\it  HMM-DN by $X$} and it implies that $X$ is a Markov chain as well. For simplicity we consider the two-letter alphabets $\X=\{1,2\}$ and $\Y=\{a,b\}$. Let $P_X$ and $P_Y$ be the transition matrices of $X$ and $Y$, respectively (all entries positive):
$$P_X= \left(
    \begin{array}{cc}
      p & 1-p \\
      q & 1-q \\
    \end{array}
  \right), \quad
   P_Y= \left(
    \begin{array}{cc}
      p' & 1-p' \\
      q' & 1-q' \\
    \end{array}
  \right).$$
Given these matrices, we construct a parametric family of PMMs with the state space $\X\times\Y$ so that the marginal processes $X$ and $Y$ were both Markov chains with these transition matrices. The parameters allow to tune the dependence between the $X$- and $Y$-sequence. In particular, a certain combination of parameters will yield the case where $X$ and $Y$ are independent Markov processes, and another combination will
provide the maximal dependence case. A PMM fulfilling these requirements has the following transition matrix:
\begin{equation}\label{milano}
\mathbb{P} =
\bordermatrix{ ~ & (1,a) & (1,b) & (2,a) & (2,b)\cr
     (1,a) & p\lambda_1 & p(1-\lambda_1) & p'-p \lambda_1 & 1+p\lambda_1 -p'-p\cr
     (1,b) & p\lambda_2 & p(1-\lambda_2) & q'-p\lambda_2 & 1+p \lambda_2 -q'-p\cr
     (2,a) & q\mu_1 & q(1-\mu_1) & p'-q\mu_1 & 1+q\mu_1-p'-q \cr
     (2,b) & q\mu_2 & q(1-\mu_2) & q'-q\mu_2 & 1+q\mu_2-q'-q\cr
    },
\end{equation}
where $\lambda_i,\mu_i, i=1,2$ are parameters satisfying the following conditions:
\begin{align*}
&\lambda_1 \in \left[\max\left\{{p'+p-1\over p},0\right\}, \min\left\{{p' \over p},1\right\} \right],
  \quad \lambda_2 \in \left[ \max\left\{ {q'+p-1\over p}, 0 \right\},\min\left\{{q' \over p}, 1\right\} \right], \\
 &\mu_1\in \left[\max\left\{{p'+q-1\over q}, 0\right\},\min\left\{{p'\over q},1 \right\} \right],
\quad \mu_2 \in \left[\max\left\{{q'+q-1\over q}, 0\right\}, \min\left\{{q'\over q}, 1 \right\} \right]. \label{MuCond}
\end{align*}
The conditions above ensure that all probabilities are in $[0,1]$.
The transition matrix in (\ref{milano}) is of form  (\ref{maatriks}) with $(p_{ij})=P_Y$, and after changing the roles of $X$ and $Y$ it has the same form with $(p_{ij})=P_X$.  Hence $(X,Y)$ is an HMM-DN as well as HMM-DN by $X$. Therefore, $X$ and $Y$ are both Markov chains with transition matrices $P_X$ and $P_Y$. The parameters have the following meaning:
\begin{align*}
&\lambda_1=P(Y_{t+1}=a|Y_t=a, X_t=1, X_{t+1}=1),\quad \lambda_2=P(Y_{t+1}=a|Y_t=b, X_t=1, X_{t+1}=1),\\
& \mu_1=P(Y_{t+1}=a|Y_t=a, X_t=2, X_{t+1}=1),\quad \mu_2=P(Y_{t+1}=a|Y_t=b, X_t=2, X_{t+1}=1).
\end{align*}
Due to the representation in (\ref{maatriks}) it is clear that any  PMM that is an HMM-DN and also HMM-DN by $X$ must have a transition matrix like (\ref{milano}).
However, it does not necessarily mean that the transition matrix in (\ref{milano}) is the only possibility for both $X$ and $Y$ being Markov chains, because the marginal processes might be Markov chains even if the model is not an HMM-DN and HMM-DN by $X$. As we shall see, an example of such a model is the (time) reversed model of $Z$ with the transition matrix in (\ref{milano}). Indeed, reversing the time does not change the Markov property of marginal processes, but it might spoil the HMM-DN-properties. Let us remark that sometimes it is useful to rewrite the last two columns of (\ref{milano}) by introducing artificial parameters $\theta_i$ and $\rho_i$, $i=1,2$, see the appendix.
\\\\
The PMMs with transition matrix (\ref{milano}) were introduced in \cite{zucca} to model dependencies between two-state Markov chains, see also \cite{sova1}. The model can be generalized to the case with bigger alphabets, for such a generalization with $P_X=P_Y$, see \cite{avans}. Observe that when $\lambda_1=\mu_1=p'$, $\lambda_2=\mu_2=q'$ and the initial distribution factorizes as $\pi(1,a)=\pi_X(1)\pi_Y(a)$, then $X$ and $Y$ are independent Markov chains, because then  $p(x_{t+1},y_{t+1}|x_t,y_t)=p(x_{t+1}|x_t)p(y_{t+1}|y_t).$ In what follows, we shall consider closer the case with $p'=p$ and $q'=q$, i.e. $P_X=P_Y$. Then taking $\lambda_1=\mu_2=1$ and choosing initial distribution $\pi$ so that $\pi(1,a)=\pi(2,b)=1$, we get the maximal dependence between $X$ and $Y$: $Y_t=a$ if and only if $X_t=1$ for any $t\geq 1$. Similarly, when
$q=1-p$, the choice $\lambda_2=\mu_1=0$ with $\pi(1,b)=\pi(2,a)=1$ yields the other case of maximal dependence: $X_t=1$ if and only if $Y_t=b$.
\paragraph{Reversibility.} Let us consider the case with $p=p'\in (0,1)$ and $q=q'\in (0,1)$, i.e. $P_X=P_Y$. We aim to find the conditions that ensure the reversibility of  $Z$. Recall that any two-state Markov chain with positive transition probabilities is always reversible. Thus, for the time-reversed $Z$, the marginal processes $X$ and $Y$ remain to be Markov chains with the same transition matrices $P_X$ and $P_Y$. Suppose now that the reversed chain is an HMM-DN and also HMM-DN by $X$. Then, as noted above, it must have the transition matrix as in (\ref{milano}) with some parameters $\lambda^R_i$ and $\mu^R_i$, $i=1,2$. The elements on the main diagonal of the  original and reversed transition matrix always coincide. Therefore, if the reversed chain has a transition matrix of the form in (\ref{milano}), then all its parameters should be the same as the parameters of the original chain, i.e. $\lambda^R_i=\lambda_i$ and $\mu^R_i=\mu_i, i=1,2.$ In other words, our model is reversible if and only if it is an HMM-DN as well as HMM-DN by $X$. The necessary and sufficient conditions for that property to hold are the following:
\begin{equation} \label{Yequations}
p_{11}^R+p_{13}^R=p, \quad p_{31}^R+p_{33}^R=p, \quad p_{21}^R+p_{23}^R=q, \quad p_{41}^R+p_{43}^R=q,
\end{equation}
\begin{equation} \label{Xequations}
p_{11}^R+p_{12}^R=p, \quad p_{21}^R+p_{22}^R=p, \quad p_{31}^R+p_{32}^R=q, \quad p_{41}^R+p_{42}^R=q.
\end{equation}
In the formulas above $(p^R_{ij})$ stands for the transition matrix of the reversed chain, the state space $\X\times Y$ is encoded as $\{1,2,3,4\}$ (i.e. $(1,a)=1$, $(1,b)=2$, $(2,a)=3$, $(2,b)=4$). The equalities in (\ref{Yequations}) are necessary and sufficient for the reversed chain being an HMM-DN and the equalities in (\ref{Xequations}) are necessary and sufficient for the reversed $Z$ being an HMM-DN by $X$. For calculating $(p^R_{ij})$, one needs to know the unique stationary distribution $(\pi_1,\pi_2,\pi_3,\pi_4)$ of $Z$ that for our model is given by
\begin{align}\notag
&\pi_1=\pi(1,a)=\pi(1)c,\quad \pi_2=\pi(1,b)=\pi(1)(1-c)=\pi(2,a)=\pi_3,\quad \pi_4=\pi(2,b)=1-\pi(1)(2-c),\\\label{st}
& \mbox{with} \, \, \pi(1)={q\over 1-p+q},\quad \mbox{and} \quad c=\pi(a|1)={ p(\lambda_2-\mu_2)+q(\mu_1-\mu_2)+\mu_2\over q(\mu_1-\mu_2)+p(\lambda_2-\lambda_1)+1}.
\end{align}
The distribution in (\ref{st}) is the unique stationary distribution unless $p+q=1$ and $\lambda_1=\mu_2=1,\lambda_2=\mu_1=0$.
With  $p_{ij}^R=(\pi_j/\pi_i)p_{ji}$, we can now verify that all equations in (\ref{Yequations}) hold if and only if
\begin{equation} \label{condYrev}
c=\frac{q\mu_1}{q\mu_1+p(1-\lambda_1)},
\end{equation}
and  all equations in $(\ref{Xequations})$ hold if and only if
\begin{equation} \label{condXrev}
c=\frac{\lambda_2}{\lambda_2-\lambda_1+1}.
\end{equation}
It can be concluded that (\ref{condYrev}) and (\ref{condXrev}) are necessary and sufficient conditions for the time-reversed chain being an HMM-DN and HMM-DN by $X$; or equivalently that these are necessary and sufficient conditions for the reversibility of $Z$. An example of a set of parameters such that (\ref{condYrev}) and (\ref{condXrev}) hold is for example: $p=0.55,q=0.8,\lambda_1=0.3, \lambda_2=0.65,\mu_1=0.446875,\mu_2=0.8894737$.
\\\\ Suppose now that (\ref{condYrev}) fails. Then the reversed chain is not an HMM-DN, but the marginal process $Y$ of the reversed chain is still a Markov process. Hence we have an example of a PMM such that $Y$ is a Markov chain, but the model is not an HMM-DN. For an example of such a chain consider (\ref{milano}) with $p=0.55$, $q=0.8$, $\lambda_1=0.52$, $\lambda_2=0.8$, $\mu_1=0.6$, $\mu_2=0.9$. The transition matrices of the original and reversed chains are
\begin{equation*}\mathbb{P} =\left(
                  \begin{array}{cccc}
                  0.286 & 0.264 & 0.264  & 0.186\\
                  0.440 & 0.110 & 0.360  & 0.090\\
                  0.480 & 0.320 & 0.070 & 0.130\\
                  0.720 & 0.080 & 0.080 & 0.120
                  \end{array}
                \right), \mathbb{P}^R = \left(
                               \begin{array}{cccc}
                                 0.2860000 & 0.2247273 & 0.2451570 & 0.2441157\\
                                 0.5168932 & 0.1100000 & 0.3200000 & 0.0531068\\
                                 0.5168932 & 0.3600000 & 0.0700000 & 0.0531068\\
                                 0.5485923 & 0.1355759 & 0.1958318 & 0.1200000
                               \end{array}
                             \right).\end{equation*}
In this example both conditions (\ref{condYrev}) and (\ref{condXrev}) fail and one can see that the reversed chain is indeed neither HMM-DN nor HMM-DN by $X$. However, both marginal chains remain Markov chains with the same transition matrices ($p=0.55$, $q=0.8$). This example shows that the marginal process might be a Markov chain without the HMM-DN property.
%
\subsection{Regime-switching model}\label{sec:regime} The model studied in the present subsection allows to present an inhomogeneous Markov chain as a  homogeneous PMM. Such models have been successfully used in segmentation of non-stationary images \cite{P04b,Pzebra,P12}.  Suppose we would like to model the process $X$ which during certain random time-periods evolves as a homogeneous Markov chain with different transition matrices. Thus, the transition matrices of $X$ can be different in different periods, but since these periods are random, the process lacks the Markov property. Suppose we consider three different transition matrices $P_A$, $P_B$ and $P_C$ on a state space $\X$. The process $X$ starts as a Markov chain with one of the three matrices, then after a certain random time-period the transition matrix changes and during the next time period the process evolves as a Markov chain with the new transition matrix. After a certain time period the matrix changes again and so on. To consider such a model as a homogeneous Markov chain, we embed $X$ into a PMM $(X,Y)$, where the $Y$-process takes values in  $\Y=\{A,B,C\}$ and we interpret $A,B,C$ as {regimes}. Inside a current regime $a\in \Y$, the transition probability $x_t\to x_{t+1}$ is  determined by the matrix $P_a$: for every $y^n\in \Y^n$ such that $y_t=y_{t+1}=a$ and $\forall i,j\in \X$,
\begin{equation}\label{regime}
P(X_{t+1}=j|X_t=i,Y^n=y^n)=P(X_{t+1}=j|X_t=i,Y_t=a,Y_{t+1}=a)=P_a(i,j).
\end{equation}
Thus, $P_A$, $P_B$ and $P_C$ are the transition matrices on $\X \times \X$ inside every regime defined by (\ref{regime}). Let $P_Y$ denote the transition matrix of regimes:
\[ P_Y=\left(
  \begin{array}{ccc}
    r_A & r_{AB} & r_{AC} \\
    r_{BA}& r_B & r_{BC} \\
    r_{CA} & r_{CB} & r_C \\
  \end{array}
\right).
\]
Since (\ref{regime}) is the same as (\ref{tr2}), we know that for (\ref{regime}) to hold it suffices to construct $(X,Y)$ so that it would be an HMM-DN, i.e. the transition matrix of $(X,Y)$ must factorize as in (\ref{maatriks}). Thus, the transition matrix of $(X,Y)$ must be as follows:
$$ \mathbb{Q} = \left(
  \begin{array}{ccc}
    r_A P_A & r_{AB} P_{AB} & r_{AC} P_{AC} \\
    r_{BA} P_{BA} & r_B P_B & r_{BC} P_{BC} \\
    r_{CA} P_{CA} & r_{CB} P_{CB} & r_C P_{C}\\
  \end{array}
\right),
$$
where the matrices $P_{AB}$, $P_{AC}$, $P_{BA}$, $P_{BC}$, $P_{CA}$ and $P_{CB}$ define the transitions of the $X$-process when the regime changes. For example,
$P_{AB}(i,j)=P(X_{t+1}=j|X_t=i,Y_t=A,Y_{t+1}=B).$ We shall call these matrices {\it inter-regime matrices}. In practice, the process is often  supposed to stay in the same regime for quite a long time, thus the off-diagonal elements of the regime transition matrix $P_Y$ are close to zero \cite{Pzebra}. In this case the choice of the inter-regime matrices has a little influence, but they should be specified and in principle there are infinitely many possibilities for doing it. On the other hand, there is only one way to choose the inter-regime matrices so that the overall PMM were a  Markov switching model, namely by choosing $P_{AB}=P_{CB}=P_B$, $P_{BA}=P_{CA}=P_A$, $P_{AC}=P_{BC}=P_C$. With this choice, the transition probability $i\to j$ under regime change is specified by the new regime, that is for every $a\in \Y$,
$$P_{aB}(i,j)=P(X_{t+1}=j|X_t=i,Y_t=a,Y_{t+1}=B)=P(X_{t+1}=j|X_t=i,Y_t=B,Y_{t+1}=B),$$
and we can see that inter-regime transitions are indeed independent of $Y_t$.
Of course, one can choose the inter-regime matrices so that the old regime specifies transitions ($P_{AB}=P_{AC}=P_A$, $P_{BA}=P_{BC}=P_B$,  $P_{CA}=P_{CB}=P_C$) or for example so that all inter-regime matrices are equal; there are  many other  meaningful options.  \\\\
In Section \ref{sec:simulations} we shall consider the case where $r_{AC}=r_{CA}=0$, i.e. $r_{AB}=1-r_A$ and $r_{CB}=1-r_C$. For example, regimes $A$, $B$ and $C$ might describe working status of some technical system with many components involved. Then $A$ could correspond to the state where all the components work, $B$ could be the state where at least one of the components is broken and $C$ could be the state where all the components are broken and the system is not working. In this example it's natural to assume that all the components cannot break down at the same time and the broken components cannot be fixed within exactly the same time, thus $r_{AC}=r_{CA}=0$. Moreover, let us
assume that the observation space is $\X=\{1,2\}$ and the observations $X_1,X_2,X_3,\ldots$ evolve in regimes $A$, $B$ and $C$ as Markov chains with transition matrices
\begin{equation}
P_A=\left(
  \begin{array}{cc}
    1-\epsilon_1 & \epsilon_1 \\
    \epsilon_2 & 1-\epsilon_2 \\
  \end{array}
\right),\quad
   P_B= \left(
  \begin{array}{cc}
    {\epsilon_2\over \epsilon_1+\epsilon_2} & {\epsilon_1\over \epsilon_1+\epsilon_2}\\
    {\epsilon_2\over \epsilon_1+\epsilon_2} & {\epsilon_1\over \epsilon_1+\epsilon_2} \\
  \end{array}
\right),\quad P_C= \left(
  \begin{array}{cc}
   \delta_1 & 1-\delta_1 \\
    1-\delta_2 & \delta_2 \\
  \end{array}
\right).
\end{equation}
The parameters $\epsilon_1$, $\epsilon_2$, $\delta_1$, $\delta_2$ could be chosen so that they are
all rather small (less than $0.5$). Then regime $A$ corresponds to longer blocks (i.e. the process $X$ jumps less in regime $A$), the regime $B$ corresponds to HMM ({conditionally} independent observations) and regime $C$ corresponds to the case of shorter blocks (i.e. the process $X$ jumps more in regime $C$ compared to regime $A$). In order to stress that the difference between regimes is solely the dependence structure, one might choose
$\epsilon_1$, $\epsilon_2$, $\delta_1$, $\delta_2$ so that the proportion of ones and twos in every regime is equal. Then the following equality must hold:
\begin{equation} \label{paramcond}
{1-\delta_2\over 2-(\delta_1+\delta_2)}={\epsilon_2\over \epsilon_1+\epsilon_2}.\end{equation}
\paragraph{An example of HMM-DN by $X$.} It is important to realize that when the matrices $P_a, a\in \Y$, are quite similar to each other or
the matrix $P_Y$ has certain properties, then it might happen that the $X$-process is a homogeneous Markov chain. For example, let us consider a model with the following transition matrices:
\[P_Y=\left(
  \begin{array}{ccc}
    r & 1-r & 0 \\
    {(1-r)/2}& r & {(1-r)/2} \\
    0 & 1-r & r \\
  \end{array}
\right), \quad
 P_A=\left(
  \begin{array}{cc}
    1-\epsilon & \epsilon \\
    \epsilon & 1-\epsilon \\
  \end{array}
\right), \quad
  P_C= \left(
  \begin{array}{cc}
   \epsilon & 1-\epsilon \\
    1-\epsilon & \epsilon \\
  \end{array}
\right), \]
\[  P_B= \left(
  \begin{array}{cc}
   {1/2} & {1/2}\\
   {1/2} & {1/2}\\
  \end{array}
\right), \quad
 P_{AB}=\left(
  \begin{array}{cc}
   p_{AB} & 1-p_{AB} \\
    1-p_{AB} & p_{AB} \\
  \end{array}
\right), \quad
   P_{BA}= \left(
   \begin{array}{cc}
   p_{BA} & 1-p_{BA} \\
    1-p_{BA} & p_{BA} \\
  \end{array}
\right), \]
\[
P_{BC}= \left(
  \begin{array}{cc}
   p_{BC} & 1-p_{BC} \\
    1-p_{BC} & p_{BC} \\
  \end{array}
\right), \quad
P_{CB}= \left(
  \begin{array}{cc}
   p_{CB} & 1-p_{CB} \\
    1-p_{CB} & P_{CB} \\
  \end{array}
\right). \]
It is shown in \cite{avans} that such a model is HMM-DN by $X$ (hence $X$ is a Markov chain) if and only if the following condition holds:
\begin{equation}\label{oi}
{r\over 1-r}\left({1\over 2}-\epsilon\right)+p_{AB}={p_{BA}+p_{BC}\over 2}=p_{CB}-{r\over 1-r}\left({1\over 2}-\epsilon\right).
\end{equation}
Then $X$ is a homogeneous Markov chain with transition matrix
$$\left(
    \begin{array}{cc}
      \alpha & 1-\alpha \\
      1-\alpha & \alpha \\
    \end{array}
  \right),\quad \mbox{where} \quad \alpha=r({1}-\epsilon)+(1-r)p_{AB}.$$
In particular, when $2{r\over 1-r}({1\over 2}-\epsilon)<1$, then one can choose inter-regime matrices so that (\ref{oi}) holds. The condition $2{r\over 1-r}({1\over 2}-\epsilon)<1$ holds when $\epsilon\approx {1/2}$, so that $P_A\approx P_B\approx P_C$, or when $r\leq  {1/2}$
(recall that $\e<{1/2}$), making the holding times in regimes $A$ and $C$ relatively short. This example illustrates that for a meaningful PMM the matrices $P_a, a\in \Y$, should not be so similar to each other and the holding times in different regimes should not be very short, otherwise $X$ might turn out to be a homogeneous Markov chain and there is no need to model it with PMMs.
\subsection{Semi-Markov model}\label{sec:semi-markov}
Let $\X=\{A,B,C,\ldots\}$ stand for a possibly infinite alphabet. A semi-Markov process is a generalization of a Markov chain on $\X$ where the sojourn times (times the chain spends in a given state) are not necessarily geometrically distributed. Let the sojourn time distribution for every $a\in \X$ be given by a probability distribution $q_a$, $q_a(k)\geq 0$ for every $k=1,2,\ldots$. After the process has spent a random time with distribution $q_a$ in state $a$, it jumps to the other state $b\ne a$ with probability $p_{ab}=P(X_{t+1}=b|X_{t+1}\ne a,X_t=a)$. Obviously $p_{aa}=0$ for every $a\in \X$. These probabilities form the transition matrix $P=(p_{ab})$, which we shall call {\it the jump matrix of $Z=(X,Y)$}.
There are various ways for considering a semi-Markov chain as a PMM. A common way is to consider a semi-Markov model as an HMM $(Y,X)$, where $Y$ is a  Markov chain with transition matrix $P$ and the values of $X_t$ are the sojourn times of $Y_t$. Thus, the distributions $q_a$ correspond to emission distributions. Sometimes  (see e.g. \cite{Psemi}), it is useful to consider it as a PMM $Z=(X,Y)$, where the state space ${\cal Z}$ consists of pairs ${\cal Z}=\{(a,k): a\in \X, q_a(\geq k)>0\}$, where $q_a(\geq k)=\sum_{i\geq k}q_a(i)$.
We see that $Y$ takes values in $\mathbb{N}^+$ and ${\cal Z}$ can be a proper subset of $\X\times \mathbb{N}^+$. The possibly infinite transition matrix of $Z$ consists mostly of zeros and is given by $$P(Z_{t+1}=(b,l)|Z_t=(a,k))=\left\{
                          \begin{array}{ll}
                            1, & \hbox{when $b=a$, $l=k-1>1$;} \\
                            p_{ab}q_b(l), & \hbox{when $b\ne a$, $k=1$;}\\
                            0, & \hbox{else.}
                          \end{array}
                        \right.$$
Thus, when $Z_t=(a,j)$ with $j>1$, then the only possible transition is to $(a,j-1)$. When $j=1$, then $X_{t+1}$ cannot be in state $a$ any more. An example of a realization $z^{10}$ of such a PMM might for example be $$(B,2),(B,1),(C,4),(C,3),(C,2),(C,1),(A,1),(B,4),(B,3),(B,2),$$  and we can  see that up to the last block the values of $Y_t$ can be actually read from $X_t$. The obtained model $Z$ is an example of a PMM that is neither HMM-DN nor HMM-DN by $X$.  Clearly neither of the marginal processes is a Markov chain. When $\X$ is finite, the matrix $P$ has a unique stationary distribution $(\pi_a)$ and all the sojourn times have finite expectations, let them be denoted by $\mu_a<\infty$. Then $Z$ has a unique stationary distribution $\pi(a,k)$, where
$$\pi(a,k)={\pi_a q_a(\geq k)\over \sum_{b\in \X}\pi_b \mu_b}.$$
\subsection{Semi-Markov regime-switching model}\label{sec:semi-regime}
If the goal is to model an inhomogeneous Markov chain with the sojourn times not being geometrically distributed, then the two PMMs -- semi-Markov and regime-switching model -- could be merged into one PMM as follows. Let $Y$ be the semi-Markov PMM considered in the previous example. Thus the states of $Y$ are pairs $(a,k)$, where $a$ is the regime and $k$ indicates the time left to be in regime $a$. Let $X$ stand for observations, $X_t$ takes values in $\{1,2,\ldots\}$. As in the regime-switching model, there corresponds a $|\X|\times |\X|$ transition matrix $P_a$ to every regime $a$. In order to specify the model, one has to choose the inter-regime matrices $P_{ab}$ as well. Just as in the regime-switching model, the corresponding PMM $(X,Y)$ can be defined with the following transition matrix:
$$P(X_{t+1}=j,Y_{t+1}=(b,l)|X_{t}=i,Y_{t+1}=(a,k))=\left\{
                                                     \begin{array}{ll}
                                                       P_{a}(i,j), & \hbox{when $a=b$, $k=l+1$;} \\
                                                       p_{ab}P_{ab}(i,j)q_b(l), & \hbox{when $a\ne b$, $k=1$;} \\
                                                       0, & \hbox{else;} \\
                                                     \end{array}
                                                   \right.$$
where $q_a(\cdot)$ are the distributions of sojourn times and $(p_{ab})$ is the jump matrix of $Y$. The obtained PMM is an HMM-DN, so when $y^n$ is a realization of $Y^n$ such that $y_{t+1}=(a,k)$, where $k>1$ (implying that $y_t=(a,k+1)$), then it holds that
$$P(X_{t+1}=j|X_t=i,Y^n=y^n)=P(X_{t+1}=j|X_t=i,Y_t=(a,k+1),Y_{t+1}=(a,k))=P_a(i,j).$$
Finally, let us remark that since $Y$ itself is a PMM, say $(U,V)$, then the obtained PMM can be considered as a three-dimensional Markov chain $(X,U,V)$. Such models are known as {\it triplet Markov models (TMM)}, see \cite{gorynin18}. Every TMM can obviously be considered as a PMM by considering two of the three marginal processes as one, thus we can consider the following PMMs: $(X,(U,V))$, $((X,U),V)$ or $((X,V),U)$.
\section{Segmentation and risks}\label{sec:segmentation}
The term `hidden Markov model' reflects the situation where the realization of $X$-chain is observed, but the realization of the Markov chain $Y$ is not observed, hence it is hidden. We now have a more general model -- PMM $(X,Y)$ --, but we still assume that a realization $x^n$ of $X^n$ is observed, whilst the corresponding realization of $Y^n$ is unknown. Thus, $x^n$ can be considered as a sample or observations and $n$ is sample size.
The {\it segmentation problem} consists of estimating the
unobserved realization of the underlying process $Y^n$ given
observations $x^n$. Formally, we are looking for a mapping
$g:{\cal X}^n \to {\cal Y}^n$ called a {\it classifier} or {\it
decoder}, that maps every sequence of observations into a state
sequence. The best classifier $g$ is often defined via a {\it loss function} $L: {\cal Y}^n\times {\cal Y}^n \to [0,\infty],$ where
$L(y^n,s^n)$ measures the loss when the actual state sequence is $y^n$ and the estimated sequence is $s^n$. For
any state sequence $s^n\in {\cal Y}^n$, the expected loss for given $x^n$ is called {\it conditional risk}:
\[\label{risk}
R(s^n|x^n):=E[L(Y^n,s^n)|X^n=x^n]=\sum_{y^n\in
{\cal Y}^n}L(y^n,s^n)p(y^n|x^n).\]
The best classifier is defined as a state sequence minimizing the conditional risk:
$$g^*(x^n)=\arg\min_{s^n\in {\cal Y}^n}R(s^n|x^n).$$
For an overview of risk-based segmentation with HMMs, see
\cite{seg,intech,chris}. The two most common loss functions used in practice are the global loss function $L_{\infty}$,
\[\label{symm-n}
L_{\infty}(y^n,s^n):=\left\{
               \begin{array}{ll}
                 1, & \hbox{if $y^n\ne s^n$,} \\
                 0, & \hbox{if $y^n=s^n$,}
               \end{array}
             \right.
\]
and the loss function $L_1$ obtained with the {\it pointwise loss function} $l: \Y\times \Y \to [0,\infty)$, where $l(s,s)=0$ $\forall s \in \cal Y$,
\begin{equation}\label{kadu-p}
  L_1(y^n,s^n):={1\over n}\sum_{i=1}^n l(y_i,s_i).
\end{equation}
Observe that the loss function $L_{\infty}$ penalizes all differences equally: no matter
whether two sequences $a^n$ and $b^n$ differ at one entry or at all entries, the penalty is one. The loss function $L_1$
on the other hand penalizes differences entrywise. The conditional risk corresponding to $L_{\infty}$ and denoted by $R_{\infty}$ is
$R_{\infty}(s^n|x^n)=1-p(s^n|x^n)$, thus the best classifier $v$ maps every
sequence of observations into sequence $s^n$ with maximum posterior probability:
$$v(x^n):=\arg\max_{s^n\in \Y^n}p(s^n|x^n).$$
Any state path $v$  maximizing $p(s^n|x^n)$ is called the {\it Viterbi path} or {\it Viterbi alignment} (it might not be unique). The best classifier in the case of $L_1$ in (\ref{kadu-p}) is obtained pointwise:
$g^*=(g^*_1,\ldots,g^*_n)$, where
\begin{equation}\label{pointwise}
g^*_t(x^n)= \arg\min_{s \in {\cal Y}}E[l(Y_t,s)|X^n=x^n]= \arg\min_{s\in \Y }\sum_{y\in \Y}
l(y,s) p_t(y|x^n).\end{equation} If
$$
l(s,s')=\left\{
  \begin{array}{ll}
    0, & \hbox{if $s=s'$,} \\
    1, & \hbox{if $s\ne s'$,}
  \end{array}
\right.$$
then the loss function $L_1$  counts pointwise differences
between $y^n$ and $s^n$. Thus the corresponding conditional risk measures the {\it expected number of
classification errors} of $s^n$ given the observations $x^n$ and can be calculated as follows:
$$R_1(s^n|x^n):=1-{1\over n}\sum_{t=1}^np_t(s_t|x^n).$$
It follows that the best classifier under $R_1$ (let us denote it by $u$) minimizes the expected number of classification errors
and it can be calculated pointwise:
\begin{equation*}
u_t(x^n)=\arg\max_{y\in {\cal Y}}p_t(y|x^n),\quad t=1,\ldots,n.
\end{equation*}
We will call any such $u$ a {\it pointwise maximum aposteriori (PMAP)} path. In PMM literature often the name {\it maximum posterior mode (MPM)} is used, see e.g. \cite{P03,P04,Pzebra,P12,gorynin18}.
\subsection{Logarithmic and hybrid risks} Define the following logarithmic risks:
\begin{align*}
\R_{\infty}(s^n|x^n)&:=-{1\over n}\ln
p(s^n|x^n),\quad \R_1(s^n|x^n):=-{1\over
n}\sum_{t=1}^n\ln p_t(s_t|x^n),
\end{align*}
then the Viterbi path $v(x^n)$ minimizes
$\R_{\infty}(\cdot|x^n)$ and the PMAP path $u(x^n)$
minimizes
$\R_{1}(\cdot|x^n)$.\\\\
The Viterbi path has biggest posterior probability, but it might be
inaccurate when it comes to the number of pointwise errors. The PMAP path on the other hand is the most accurate state path in terms of expected number of errors, but it might have very low or even zero posterior probability. In what follows, paths with zero posterior probability are called {\it inadmissible}. Often the
goal is to find a state path that combines the two desired properties: it has a relatively big likelihood and relatively high accuracy.
In \cite{seg}, a family of {\it hybrid paths} was defined. A hybrid path operates between the PMAP and Viterbi path and is the solution to the following problem:
\begin{equation}\label{hybrid}
\min_{s^n}[\R_1(s^n|x^n)+C\R_{\infty}(s^n|x^n)]\quad
\Leftrightarrow\quad \max_{s^n}\big[\sum_{t=1}^n \ln
p_t(s_t|x^n)+C\ln p(s^n|x^n)],
\end{equation}
where $\R_C=\R_1(s^n|x^n)+C\R_{\infty}(s^n|x^n)$ is the hybrid risk and $C\geq 0$ is a regularization constant. The case $C=0$
corresponds to the PMAP path and it is easy to see that increasing
$C$ increases the posterior probability ($\R_{\infty}$-risk) and
decreases the accuracy ($\R_1$-risk) (see, e.g. Lemma 16 in \cite{seg}). If $C$ is sufficiently big (depending on the model and
$x^n$), then the solution is given by the Viterbi path. We
now give an interpretation of the hybrid risk in terms of blocks.\\\\
As a remedy against zero-probability PMAP paths, Rabiner \cite{rabiner} proposed in his seminal
tutorial the following: instead of maximizing the sum
$\sum_{t=1}^n p_t(s_t|x^n)$ over all $s^n\in {\cal Y}^n$, consider blocks of size $k$ and maximize
\begin{equation}\label{k-block}
p(s_1^k|x^n)+p(s_2^{k+1}|x^n)+\ldots + p(s_{n-k+1}^n|x^{n}).
\end{equation}
The case $k=1$ corresponds to the
PMAP path, the bigger $k$, the `closer' we come to the Viterbi
path. This idea can be generalized by defining a $k$-block
loss function as follows:
\begin{equation}\label{Lk}
l_k: \Y^k \times \Y^k \to [0,\infty),\quad
L_k(y^n,s^n):={1\over {n-k+1}}\sum_{t=0}^{n-k}
l_k(y_{t+1}^{t+k},s_{t+1}^{t+k}).
\end{equation}
For HMMs, the case $k=2$ is studied in \cite{chris} under the name
{\it Markov loss function}. When
\begin{align}\label{pll}
l_k(y_{t+1}^{t+k},s_{t+1}^{t+k})=I_{\{y_{t+1}^{t+k} \ne s_{t+1}^{t+k}\}},\end{align}
then minimizing the risk corresponding to the loss function
$L_k$ is equivalent to maximizing (\ref{k-block}). The case $k=2$ corresponds to the state path that
maximizes the expected number of correctly classified pairs (transitions). Unfortunately, the path minimizing the expected
$L_k$-loss can still have posterior probability 0 (see the example in \cite{seg}), therefore we use the following modification of
(\ref{k-block}). Define for any $k=1,2,\ldots,n$,
\begin{align*}
\R_k(s^n|x^n)&:=-{1\over n}\sum_{t=1-k}^{n-1}\ln p\left(s_{\max(t+1,1)}^{\min(t+k,n)}\big|x^n\right).
\end{align*}
For example, if $k=3$ and $n=7$, then denoting $r(s_{u}^t):=\ln p(s_{u}^t|x^n),$ we have
\begin{align*}
 -n\R_3(s^n|x^n)=r(s_1)+r(s_{1}^2)+r(s_{1}^3)+r(s_{2}^4)+r(s_{3}^5)+r(s_{4}^6)+r(s_{5}^7)+r(s_{6}^7)+
 r(s_7).
 \end{align*}
Thus, for small $k$, $-n \R_k(s^n|x^n)$ is basically the sum
$$
\ln p(s_{1}^k|x^n)+\ln p(s_{2}^{k+1}|x^n)+\cdots +\ln p(s_{n-k+1}^n|x^n).$$
Let $u_k(x^n)$ minimize $\R_{k}(\cdot|x^n)$. Clearly the PMAP
alignment $u(x^n)$ minimizes $\R_{1}(\cdot|x^n)$,  hence
$u_1(x^n)=u(x^n)$. The connection between hybrid risks and
blocks is given by the following proposition.
\begin{proposition}\label{prop} Let $1<k \leq n$, then for every $s^n\in {\cal Y}^n$,
\begin{equation}\label{rec}
\R_k(s^n|x^n)=\R_{\infty}(s^n|x^n)+\R_{k-1}(s^n|x^n).
\end{equation}
\end{proposition}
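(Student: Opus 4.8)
The plan is to reduce (\ref{rec}) to one structural fact --- that conditionally on $X^n=x^n$ the state sequence $Y^n$ is a (possibly inhomogeneous) Markov chain --- and then to telescope the sums defining $\R_k$ and $\R_{k-1}$ against each other. First I would establish the posterior Markov property. Writing $z_t=(x_t,y_t)$ and using that $Z$ is a homogeneous Markov chain, for fixed $x^n$ the joint law factorizes, as a function of $y^n$, into nearest-neighbour factors,
$$p(y^n|x^n)=\frac{1}{p(x^n)}\,p(x_1,y_1)\prod_{t=2}^n p(z_t|z_{t-1}),$$
each depending on at most two consecutive coordinates $y_{t-1},y_t$; this is exactly the defining property of a Markov chain. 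Consequently $p(s_t|s_1^{t-1},x^n)=p(s_t|s_{t-1},x^n)$, so that for all $1\le a<b\le n$,
$$p(s_a^b|x^n)=p(s_a^{b-1}|x^n)\,p(s_b|s_{b-1},x^n),\qquad p(s^n|x^n)=p(s_1|x^n)\prod_{j=2}^n p(s_j|s_{j-1},x^n).$$
I may assume $p(s^n|x^n)>0$, since otherwise both sides of (\ref{rec}) equal $+\infty$; when it is positive every block probability occurring below is positive and every logarithm is finite.

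Next I would pair the summands, observing that the $t$-th terms of $-n\R_k(s^n|x^n)$ and of $-n\R_{k-1}(s^n|x^n)$ share the same left endpoint $\max(t+1,1)$ and differ only at the right end. The $\R_k$-sum runs over $t\in\{1-k,\ldots,n-1\}$ and the $\R_{k-1}$-sum over $t\in\{2-k,\ldots,n-1\}$, so the only unmatched index is $t=1-k$, whose block is $s_{\max(2-k,1)}^{\min(1,n)}=s_1$. For each common index $t$ I would compare the right endpoints $\min(t+k,n)$ and $\min(t+k-1,n)$: when $t\le n-k$ the $k$-block is the $(k-1)$-block with the single state $s_{t+k}$ appended on the right, so by the factorization above the difference of the logarithms equals $\ln p(s_{t+k}|s_{t+k-1},x^n)$; when $t\ge n-k+1$ both right endpoints equal $n$, the two blocks coincide, and the difference is $0$.

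Collecting the survivors, the appended transitions for $t=2-k,\ldots,n-k$ run, under the substitution $j=t+k$, through $j=2,\ldots,n$, and hence
$$-n\left(\R_k(s^n|x^n)-\R_{k-1}(s^n|x^n)\right)=\ln p(s_1|x^n)+\sum_{j=2}^n\ln p(s_j|s_{j-1},x^n)=\ln p(s^n|x^n)=-n\,\R_\infty(s^n|x^n),$$
which is precisely (\ref{rec}). The Markov property itself is standard --- it underlies the forward--backward recursions that the paper notes hold for every PMM --- so I expect the only delicate point to be the boundary bookkeeping of the sliding window: one must verify that the extra $t=1-k$ term contributes exactly the initial factor $p(s_1|x^n)$ and that every truncated index $t\ge n-k+1$ contributes nothing, so that the telescoped transitions assemble into the full path probability $p(s^n|x^n)$ with no leftover or double-counted factor.
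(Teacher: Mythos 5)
Your proposal is correct and takes essentially the same route as the paper: both arguments rest on the observation that, conditionally on $X^n=x^n$, the sequence $Y^n$ of a PMM is a first-order (inhomogeneous) Markov chain, combined with the telescoping identity $\bar{U}_k(s^n|x^n)=p(s^n|x^n)\,\bar{U}_{k-1}(s^n|x^n)$ for the sliding-window products defining $\R_k$. The only difference is presentational: the paper imports that identity from Theorem 6 and Corollary 7 of \cite{seg}, whereas you prove it in-line by matching windows with common left endpoints and checking the boundary terms ($t=1-k$ yielding $\ln p_1(s_1|x^n)$ and $t\geq n-k+1$ yielding zero), which makes the argument self-contained but mathematically identical.
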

\begin{proof} For HMMs the proposition was proved in \cite{seg} (Theorem 6 and Corollary 7). Let us denote
\[ \bar{U}_k(s^n|x^n):= \prod_{t=1-k}^{n-1} p \left( s_{\max(t+1,1)}^{\min(t+k,n)} \Big| x^n \right), \]
then $\R_k(s^n|x^n)=-1/n \ln \bar{U}_k(s^n|x^n)$. In \cite{seg} it was shown by applying the Markov property that for any realization $s^n$
of the first order Markov chain, $\bar{U}_k(s^n)=p(s^n)\bar{U}_{k-1}(s^n)$. Since in the case of a PMM $(X,Y)$, $Y^n|x^n$ is a first order
(inhomogeneous) Markov chain, the proof immediately holds for PMMs. \end{proof}\\\\
From Proposition \ref{prop} it follows that
$$\R_k(s^n|x^n)=(k-1)\R_{\infty}(s^n|x^n)+\R_{1}(s^n|x^n),$$
thus when $C=k-1$, the hybrid risk $C\R_{\infty}+\R_1$ is actually
the $k$-block risk $\R_k$. Therefore, the hybrid risk can be considered
as a generalization of the $k$-block risk for non-integer value of $C$.
\subsection{Algorithms}\label{sec:algorithms}
The block risks and hybrid risks are meaningful and theoretically
justified, but the direct optimization of any risk over $\Y^n$ is
beyond computational capacity even for moderate $n$. Therefore, dynamic programming algorithms similar to the Viterbi one should be
applied. For HMMs the algorithm for the hybrid risk was worked out in \cite{seg}. In the present paper we state the dynamic programming
algorithm also for PMMs. Let us denote the states of $\cal Y$ by $1,\ldots,l$.
\begin{proposition} \label{prop1}
The state path(s) minimizing the hybrid risk
\begin{equation}\label{hybridB}
C\R_{\infty}(s^n|x^n)+B\R_1(s^n|x^n) = -{C \over n} \ln p(s^n|x^n) - {B \over n} \sum_{t=1}^n \ln p_t(s_t|x^n)
\end{equation}
can be found by the following recursion. Define the following scores:
\begin{align*}
\delta_1(j)&=C\ln p(x_1,j)+B\ln p_1(j|x^n), \quad \forall j \in \cal Y,\\
\delta_{t+1}(j)&=\max_{i \in \cal Y}\Big(\delta_t(i)+C\ln p(x_{t+1},j|x_t,i)\Big)+B\ln p_{t+1}(j|x^n),
\quad t=1,\ldots,n-1,\quad \forall j \in \cal Y.
\end{align*}
Using the scores $\delta_t(j)$,
define the backpointers $\psi_t(j)$ and the terminal state $\psi_n$ as follows:
\begin{align*}
\psi_{t}(j)&=\arg\max_{i \in \cal Y}\Big[\delta_t(i)+C\ln p(x_{t+1},j|x_t,i)\Big], \quad t=1,\ldots,n-1; \\
\psi_n & = \arg\max_{i \in \cal Y}\delta_n(i).
\end{align*}
The optimal state path $\hat{y}=(\hat{y}_1,\ldots,\hat{y}_n)$ minimizing the hybrid risk in (\ref{hybridB}) can be obtained as
\[ \hat{y}_n= \psi_n, \quad \hat{y}_t=\psi_t(\hat{y}_{t+1}), \quad t=n-1,\ldots,1.\]
\end{proposition}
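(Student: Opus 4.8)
The plan is to verify that the proposed recursion is a correct dynamic-programming solution to the maximization problem equivalent to minimizing the hybrid risk in \eqref{hybridB}. Since minimizing $C\R_\infty(s^n|x^n)+B\R_1(s^n|x^n)$ is the same (up to the factor $-1/n$, which is a positive constant and so does not affect the $\argmin$) as maximizing
\[
F(s^n):=C\ln p(s^n|x^n)+B\sum_{t=1}^n\ln p_t(s_t|x^n),
\]
the task reduces to showing that the stated $\delta$-recursion computes $\max_{s^n}F(s^n)$ and that the backpointers reconstruct a maximizing path. The essential structural fact I would invoke is the one already established in the proof of Proposition \ref{prop}: for a PMM $(X,Y)$, the conditional process $Y^n\mid x^n$ is a first-order inhomogeneous Markov chain. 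This is precisely what makes the global term $\ln p(s^n|x^n)$ decompose additively along the path.

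First I would expand the global log-probability using the Markov factorization of the conditional chain, writing
\[
p(s^n|x^n)=p_1(s_1|x^n)\prod_{t=1}^{n-1}p(s_{t+1}|s_t,x^n).
\]
I would then re-express each conditional transition $p(s_{t+1}|s_t,x^n)$ in terms of the model quantities appearing in the recursion, namely $p(x_{t+1},s_{t+1}|x_t,s_t)$; this is the step where the proof is genuinely about PMMs rather than HMMs, and I would lean on the joint-distribution identities for the pairwise chain (Bayes' rule applied to $p(s_{t+1}|s_t,x^n)\propto p(x_{t+1},s_{t+1}|x_t,s_t)\cdot(\text{backward factor})$). The backward factors are the same across all choices of $s_{t+1}$ for fixed position, so they contribute a path-independent additive constant to $F$ and may be dropped without changing the $\argmax$; what survives is exactly the increment $C\ln p(x_{t+1},j|x_t,i)+B\ln p_{t+1}(j|x^n)$ used in the definition of $\delta_{t+1}(j)$.

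Once $F(s^n)$ is written as a sum of local terms depending only on consecutive pairs $(s_t,s_{t+1})$ together with the per-site accuracy terms $B\ln p_t(s_t|x^n)$, the standard Viterbi/Bellman argument applies verbatim. I would prove by induction on $t$ that $\delta_t(j)$ equals the maximum of the truncated objective over all paths $s^t$ ending in $s_t=j$; the inductive step is the single-line maximization $\delta_{t+1}(j)=\max_i[\delta_t(i)+C\ln p(x_{t+1},j|x_t,i)]+B\ln p_{t+1}(j|x^n)$, which is exactly the recursion as stated. Taking $\max_j\delta_n(j)$ then gives $\max_{s^n}F(s^n)$, and the backpointers $\psi_t$ reconstruct an optimizer by the usual argument that each stored $\argmax$ realizes the maximum in the corresponding Bellman step. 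The complexity claim $\O(|\Y|n)$ follows since each of the $n$ stages performs $|\Y|$ maximizations over $|\Y|$ predecessors, with the quantities $p_t(j|x^n)$ supplied by the forward–backward recursions as a preprocessing step independent of $C$ and $B$.

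**The main obstacle** I anticipate is the second step: cleanly justifying that $\ln p(s^n|x^n)$ telescopes into increments involving the model transition $p(x_{t+1},j|x_t,i)$ rather than the conditional transition $p(s_{t+1}|s_t,x^n)$ directly, and confirming that the discarded backward/normalizing factors really are path-independent at each fixed $t$ so that they do not perturb the $\argmax$. For HMMs this was handled in \cite{seg}, but here one must check that the argument uses only the first-order conditional Markov property of $Y^n\mid x^n$ (guaranteed for any PMM) and not any HMM-specific factorization such as conditional independence of observations; since the recursion is phrased through $p(x_{t+1},j|x_t,i)$, the generality of the PMM is fully accommodated, and the remaining steps are the routine Bellman induction.
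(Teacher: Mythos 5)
Your overall plan (reduce minimization of the hybrid risk to maximizing a sum, decompose the global term additively along the path, then run the standard Bellman induction with backtracking) matches the paper's, and the induction/backtracking/complexity parts are fine. The genuine problem is in your key decomposition step, and the justification you give for it is false as stated. Writing, via the Markov property of $Z$,
\[
p(s_{t+1}|s_t,x^n)=\frac{p(x_{t+1},s_{t+1}|x_t,s_t)\,\beta_{t+1}(s_{t+1})}{\beta_t(s_t)},
\qquad \beta_t(i):=p(x_{t+1}^n|x_t,\,y_t=i),
\]
the backward factor $\beta_{t+1}(s_{t+1})$ is \emph{not} ``the same across all choices of $s_{t+1}$ for fixed position'': it genuinely depends on the state $s_{t+1}$, so at a fixed $t$ it is not a path-independent constant and cannot be dropped from the increment without changing the $\argmax$. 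The reason the backward factors disappear is telescoping, not state-independence: in the product $\prod_{t=1}^{n-1}p(s_{t+1}|s_t,x^n)$ the numerator $\beta_{t+1}(s_{t+1})$ at step $t$ cancels against the denominator at step $t+1$, leaving only $\beta_n\equiv 1$ and $1/\beta_1(s_1)$, and the latter cancels against the factor $\beta_1(s_1)$ hidden in $p_1(s_1|x^n)=p(x_1,s_1)\beta_1(s_1)/p(x^n)$. After this repair your factorization collapses to
\[
p(s^n|x^n)=\frac{p(s_1,x_1)\prod_{t=2}^{n}p(s_t,x_t|s_{t-1},x_{t-1})}{p(x^n)},
\]
which is exactly where the paper starts, and from there the Bellman argument goes through as you describe.

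It is also worth noting that the detour through the conditional chain $Y^n\mid x^n$ is unnecessary here. The paper's proof never invokes the conditional first-order Markov property (that fact is used for Proposition \ref{prop}, the block-risk recursion, not for Proposition \ref{prop1}): it simply observes that $\ln p(s^n|x^n)=\ln p(s^n,x^n)-\ln p(x^n)$, discards the single path-independent constant $\ln p(x^n)$, and factorizes the \emph{joint} probability directly by the homogeneous Markov property of the pairwise chain $Z=(X,Y)$, which immediately yields the increments $C\ln p(x_{t+1},j|x_t,i)+B\ln p_{t+1}(j|x^n)$ of the recursion. So your route is fixable and ultimately equivalent, but as written the step you yourself flagged as the main obstacle rests on an incorrect claim, and the simpler joint-factorization argument avoids the issue entirely.
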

\begin{proof} The proof of the proposition can be performed using induction. Observe that minimizing (\ref{hybridB}) over all $s^n$
is equivalent to
\[ \max_{s^n} [ C \ln p(s^n,x^n)+ B\sum_{t=1}^n \ln p_t(s_t|x^n)]  =\max_{s^n} [ C\ln p(s_1,x_1)+C \sum_{t=2}^n \ln p(s_t,x_t|s_{t-1},x_{t-1})+ B\sum_{t=1}^n \ln p_t(s_t|x^n) ].   \]
Let
\begin{align*}
U(s_1)&=C\ln p(s_1,x_1)+ B \ln p_1(s_1|x^n), \\
U(s^t)&=U(s_1)+ \sum_{u=2}^t \left[C \ln p(x_u,s_u|x_{u-1},s_{u-1})+ B\ln p_u(s_u|x^n)\right], \quad t=2,\ldots,n.
\end{align*}
Then
\[ U(s^{t})=U(s^{t-1}) +C\ln p(x_{t},s_{t}|x_{t-1},s_{t-1})+B\ln p_{t}(s_{t}|x^n), \quad t=2,\ldots,n, \]
and we can see directly that $\delta_t(j)$ gives the score of the state path $s^{t}$ that minimizes the hybrid risk
and ends in state $j$, that is \[ \delta_t(j)=\max_{s^{t}: \, s_t=j}  U(s^{t}).\]
By induction on $t$ this holds also for $\delta_n(j)$, therefore backtracking from $\psi_n$ gives us the optimal hybrid state path.
\end{proof}\\\\
Observe that the constant $B$ in (\ref{hybridB}) is redundant, since in
practice one can always take $B=1$ and vary the constant $C$, just
like in (\ref{hybrid}). The reason for adding $B$ to the recursion is that it immediately allows
to obtain the Viterbi algorithm by taking $B=0$ and $C=1$.\\\\
In the special case of HMM-DN the recursion is
$$\delta_{t+1}(y_{t+1})=\max_{y_t}\Big(\delta_t(y_t)+C\ln p(y_{t+1}|y_t)+C\ln p(x_{t+1}|y_{t+1}, y_t,x_t)\Big)+B\ln p_{t+1}(y_{t+1}|x^n)$$
and for a Markov switching model $p(x_{t+1}|y_{t+1}, y_t,x_t)=p(x_{t+1}|y_{t+1},x_t)$. The algorithm above involves applying {\it forward-backward algorithms} to find the probabilities
$p_t(y|x^n)$ for every $y\in \Y$ and $t$. The  forward algorithm finds recursively the probabilities  $p(x^t,y_t)$:
$$p(x^{t},y_{t})=\sum_{y_{t-1}\in \Y}p(x_{t},y_{t}|x_{t-1},y_{t-1})p(x^{t-1},y_{t-1}),$$
and the backward algorithm finds recursively the probabilities $p(x_{t+1}^n|x_t,y_t)$ as follows:
$$p(x_{t+1}^n|x_{t},y_{t})=\sum_{y_{t+1}\in \Y}p(x_{t+1},y_{t+1}|x_{t},y_{t})p(x_{t+2}^n|x_{t+1},y_{t+1}).$$
In practice the scaled versions of these probabilities are used, see \cite{P04}. The (scaled) forward-backward algorithms work essentially in the same way as for HMMs, this is all due to the Markov property.
Observe that when the model is HMM-DN by $X$ (like in our first example in Subsection \ref{sec:milano}), then by (\ref{alfa}),  $p(y_t|x^n)=p(y_t|x^t)$, thus $p(y_t|x^n)$ can be obtained by the forward recursion only.
The scaled forward recursion in this particular case is simply
\begin{equation}\label{scaledMilano}
p(y_t|x^t)=\sum_{y_{t-1}}p(y_t|x_t,x_{t-1},y_{t-1})p(y_{t-1}|x^{t-1}).
\end{equation}
For large $n$, replacing the forward-backward recursion by the forward one might be
a big computational advantage. Moreover, when the model is HMM-DN by $X$ and the time-reversed model is HMM-DN by $X$ as well, then for the stationary chain it holds that $p(y_t|x^n)=p(y_t|x_t)$. Therefore, in this case the probabilities $p(y_t|x^n)$ can be found without any forward-backward algorithms, which makes these models especially appealing from the computational point of view.
\paragraph{Rabiner $k$-block algorithm.}
In practice it is interesting to compare the state path estimates of the hybrid approach to the Rabiner
$k$-block state path estimates defined in (\ref{k-block}). Next we will give the algorithm for computing
the Rabiner $k$-block state path estimates.
\begin{proposition}\label{prop2}
The state path(s) minimizing the risk function corresponding to the Rabiner $k$-block approach and thus, maximizing the sum of probabilities in  (\ref{k-block}), can be found by the following recursion. Define for every $a\in {\cal Y}^{k-1}$ scores $\delta_t(a)$ and backpointers $\psi_t(a)$ as follows:
\begin{align*}
\delta_1(a)&=  \max_{y_1 \in \cal Y} p (y_1,y_2^k=a|x^n), \\
\psi_1(a)& = \arg\max_{y_1 \in \cal Y} p (y_1,y_2^k=a|x^n),  \\
\delta_t(a)&= \max_{y_t \in \cal Y}\left(\delta_{t-1}(y_t,a_1,\ldots,a_{k-2})+p(y_t,y_{t+1}^{t+k-1}=a|x^n)\right), \quad t=2,\ldots,n-k+1, \\
\psi_t(a)&= \arg\max_{y_t \in \cal Y}\left(\delta_{t-1}(y_t,a_1,\ldots,a_{k-2})+p(y_t,y_{t+1}^{t+k-1}=a|x^n)\right), \quad t=2,\ldots,n-k+1.
\end{align*}
Let $a_{end}=\arg\max_{a\in {\cal Y}^{k-1}} \delta_{n-k+1}(a)$. Then the state path $\hat{y}=(\hat{y}_1,\ldots,\hat{y}_n)$ maximizing (\ref{k-block}) can be obtained as
\[ \hat{y}_{n-k+2}^n=a_{end}, \quad \hat{y}_t=\psi_t(\hat{y}_{t+1}^{t+k-1}),\quad t=n-k+1,\ldots,1. \]
\end{proposition}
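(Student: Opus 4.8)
The plan is to establish that the Rabiner $k$-block objective in (\ref{k-block}) decomposes additively in a way that admits a dynamic programming recursion over sliding windows of length $k-1$, exactly as the Viterbi algorithm decomposes over single states. The key observation is that maximizing (\ref{k-block}) means maximizing $\sum_{t=1}^{n-k+1} p(s_t^{t+k-1}|x^n)$, a sum of $n-k+1$ terms, each depending on a block of $k$ consecutive states. I would treat the overlapping window $a=(y_{t+1},\ldots,y_{t+k-1})\in\Y^{k-1}$ as the ``generalized state'' at time $t$, since consecutive summands share exactly $k-1$ coordinates. This is the standard device for converting a finite-memory objective into a first-order dynamic program on an enlarged state space of size $|\Y|^{k-1}$, which immediately explains the claimed complexity $O(|\Y|^{k-1}n)$.

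First I would define the partial objective: for a window $a$ ending the block at position $t+k-1$, let $\delta_t(a)$ denote the maximum, over all choices of the earlier states $y_1,\ldots,y_t$ compatible with ending in window $a$, of the partial sum $\sum_{u=1}^{t} p(y_u^{u+k-1}|x^n)$ where the tail of this partial path agrees with $a$. I would then verify the base case $\delta_1(a)$ directly from (\ref{k-block}): the first summand is $p(y_1,y_2^k|x^n)=p(y_1,a|x^n)$, and optimizing over the single free coordinate $y_1$ gives the stated formula. For the inductive step I would argue that any optimal path ending in window $a=(y_{t+1}^{t+k-1})$ at time $t$ arises by appending this window to an optimal path whose previous window was $(y_t,a_1,\ldots,a_{k-2})=(y_t,y_{t+1},\ldots,y_{t+k-2})$, and whose contribution differs by the single new summand $p(y_t,y_{t+1}^{t+k-1}|x^n)$; maximizing over the one newly-freed coordinate $y_t$ yields exactly the recursion for $\delta_t(a)$. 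This is the Bellman optimality argument, and the overlap structure guarantees that the two consecutive windows share the middle $k-2$ coordinates, so the concatenation is consistent.

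The backtracking step is then routine: $a_{end}=\arg\max_a \delta_{n-k+1}(a)$ recovers the optimal final window $\hat{y}_{n-k+2}^n$, and the backpointers $\psi_t$ peel off one coordinate at a time via $\hat{y}_t=\psi_t(\hat{y}_{t+1}^{t+k-1})$. I expect the main obstacle to be purely notational rather than conceptual: one must be careful with the index bookkeeping so that the window shifts align correctly, that the recursion ranges $t=2,\ldots,n-k+1$ cover precisely the $n-k+1$ summands in (\ref{k-block}), and that the first window at $t=1$ absorbs the initial free coordinate while the final window at $t=n-k+1$ absorbs the remaining $k-1$ coordinates $\hat{y}_{n-k+2}^n$. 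Unlike Proposition \ref{prop1}, no probabilistic identity such as (\ref{rec}) is needed here; the correctness follows entirely from the additive structure of the objective and a straightforward induction on $t$, so I would present the argument compactly and emphasize that each conditional block probability $p(s_{t+1}^{t+k}|x^n)$ is itself computable from the forward-backward quantities since $Y^n|x^n$ is a first-order inhomogeneous Markov chain.
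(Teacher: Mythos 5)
Your proposal is correct and matches the paper's own argument: the paper proves Proposition \ref{prop2} ``analogously to Proposition \ref{prop1}'' by defining the partial sums $U(s^t)$ of block probabilities, showing by induction that $\delta_{t}(a)$ equals the maximum of the partial objective over all paths whose terminal window of length $k-1$ is $a$, and backtracking from $a_{end}$ --- exactly your sliding-window Bellman induction on the enlarged state space $\Y^{k-1}$. Your observations that no identity like (\ref{rec}) is needed and that the block probabilities come from the forward--backward quantities also agree with the paper.
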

{\begin{proof} The proof is analogous to the proof of Proposition \ref{prop1}. \end{proof}}
%
%
\paragraph{Remark.} When $k=2$, the scores have to be calculated just for every $j\in \cal Y$, then
\[ \delta_t(j)= \max_{y_{t} \in \cal Y}\left(\delta_{t-1}(y_{t})+p(y_{t},y_{t+1}=j|x^n)\right), \quad t=2,\ldots,n-k+1. \]
\section{Behaviour of different state path estimators}\label{sec:simulations}
Let us now consider the hybrid risk with $B=1$, i.e. the optimization problem in (\ref{hybrid}). We know that the solution of (\ref{hybrid}) for $C=0$ corresponds to the PMAP path and the solution for large $C$ corresponds to the Viterbi path.
Let $C_o$ be the smallest constant such that the solution of (\ref{hybrid}) is a Viterbi path for every $C>C_o$. When $\Y$ is finite, then also the set $\Y^n$ is finite, thus $C_o$ surely exists. When $C=C_o$, then
(\ref{hybrid}) has at least two solutions: one of them is the Viterbi path and one of them is not; when  $C<C_o$, then none of the hybrid paths is Viterbi. It is also important to observe that in case the Viterbi path is not unique, it is sometimes
meaningful to optimize (\ref{hybrid}) with $C>C_o$ rather than to run the Viterbi algorithm with some tie-breaking rule. Because although different Viterbi paths have the same $\bar{R}_{\infty}$-risk, they might have different
 $\bar{R}_{1}$-risks, therefore the solution of (\ref{hybrid}) corresponds to the Viterbi path $v$ that maximizes $\sum_t p(v_t|x^n)$ and has therefore the minimum expected number of classification errors amongst all the Viterbi paths -- {\it primus inter pares}.
\\\\
However, when the goal is to find a hybrid path that is neither the Viterbi nor PMAP path, then only the range $(0,C_o)$ for $C$ is of interest. Obviously that range depends on the model, but as we shall see, it might very much depend also on the observation sequence $x^n$ and this dependence can be very unstable even for simplest models. More precisely, we consider the  model in Subsection \ref{sec:milano} and show that for every $M<\infty$, there exists $n$ and observations $x^n$, so that with $x^{n+3}=(x_1,\ldots, x_n,1,1,1)$ it holds that $C_o(x^{n+1})-C_o(x^n)>M$. In other words, adding three more observations increases $C_o$ tremendously.
\subsection{Variation of regularization constant $C$}\label{sec:example1} Consider the PMM defined in (\ref{milano}) with parameters $p'=p=6/7$ and $q'=q=8/35$, and $\lambda_1=0.9$, $\lambda_2=0.2$, $\mu_1=0.8$, $\mu_2=0.1$.
Then $\theta_1=0.6$, $\theta_2=0.4$ (see the appendix). Choose the initial distribution $\pi=(\pi(1,a),\pi(1,b),\pi(2,a),\pi(2,b))$ so that
\[ \pi(a|1)=P(X_1=1,Y_1=a)/P(X_1=1)=0.452328159645=:c_a. \]
1) At first consider the observation sequence
\[ x^n=(1,\underbrace{1,2,1}_1,\underbrace{1,2,1}_2,\ldots,\underbrace{1,2,1}_m), \]
thus the sequence has a particular pattern and is of length $n=3m+1$, $m=1,2,\ldots$. Let us denote $\alpha_t(y)=p(y_t=y|x^t)$, then by recursion (\ref{scaledMilano}),
\begin{align*}
  \alpha_1(a)& = c_a,\quad  \alpha_2(a) = \alpha_1(a)\lambda_1  + \alpha_1(b) \lambda_2 =0.5166297...,\quad  \alpha_3(a)= \alpha_2(a)\theta_1  +\alpha_2(b) \theta_2 =0.5033259... \\
  \alpha_4(a)& = \alpha_3(a) \mu_1 +\alpha_3(b) \mu_2 =c_a,\quad \alpha_5(a)=\alpha_2(a),\quad  \alpha_6(a)=\alpha_3(a),\quad \ldots
\end{align*} so that
$\alpha_t(y)=\alpha_{t-3}(y)$, $y\in \{a,b\}$, $t=4,\ldots,n$. For this observation sequence $x^n$, the PMAP path and Viterbi path are given by
\[
 \mbox{PMAP}: \quad b\underbrace{aab}_1 \underbrace{aab}_2 \ldots \underbrace{aab}_m, \quad \quad
 \mbox{Viterbi}: \quad bbbbbbb \ldots bbb. \\
\]
For any path $y^n$, thus the {\it C-score} with $B=1$ (recall (\ref{hybridB})) reads as follows:
$$C\ln p(y^n|x^n)+\sum_{t=1}^n\ln p_t(y_t|x^n),$$
and the hybrid path corresponding to $C$ maximizes the $C$-score. In our example, the last hybrid path before Viterbi (when increasing $C$)
is given by
\begin{equation} \label{borderpath}y^n=(b,\underbrace{a,b,b}_1, \underbrace{a,b,b}_2, \ldots \underbrace{a,b,b}_m) .\end{equation}
The difference between the $C$-scores of the Viterbi path and $y^n$ in $(\ref{borderpath})$ is given by
\[ C\ln p(bbbb\ldots bbb|x^n)-C\ln p(babb\ldots abb|x^n)+\sum_{t=1}^n \ln \alpha_t(b)- \sum_{t=1}^n \ln \alpha_t(y_t)  \]
\[ =C \ln { [(1-\lambda_2)(1-\theta_2)]^m \over [\lambda_2(1-\theta_1)]^m } + m \ln{\alpha_2(b)\over \alpha_2(a)} = Cm\ln 6 + m \ln{\alpha_2(b)\over \alpha_2(a)}.\]
Thus, if
\[ C>{ \ln(\alpha_2(a)/\alpha_2(b)) \over \ln 6}=:C_o, \]
then the hybrid path becomes Viterbi, and this holds independently of $m$ or sequence length $n$. Observe that $C_o=0.0371$ is very small in this example and in terms of block size the hybrid path corresponding to block length $k=2$ would here be equal to the Viterbi path.
\\\\
2) Add now the piece $111$ to the end of $x^n$ and consider
\[ x^n=(1,\underbrace{1,2,1}_1,\underbrace{1,2,1}_2,\ldots,\underbrace{1,2,1}_m,1,1,1). \]
The PMAP path and Viterbi path are now given by
\[
 \mbox{PMAP}: \quad b\underbrace{aab}_1 \underbrace{aab}_2 \ldots \underbrace{aab}_m aaa, \quad \quad
 \mbox{Viterbi}: \quad aaaaaaa \ldots aaaaaa. \\
\]
Let us compare the $C$-scores of the Viterbi path and constant path $bbbb\ldots b$:
\begin{align*}
 &C\ln p(b\ldots b|x^n)+m\sum_{t=1}^3 \ln \alpha_t(b)+\sum_{t=n-3}^n \ln \alpha_t(b)-C\ln p(a\ldots a|x^n)-m\sum_{t=1}^3 \ln \alpha_t(a)-\sum_{t=n-3}^n \ln \alpha_t(a)\\
&=C\ln {p(b\ldots b|x^n)\over  p(a\ldots a|x^n)}+m\sum_{t=1}^3 \ln {\alpha_t(b)\over \alpha_t(a)}+\sum_{t=n-3}^n\ln  {\alpha_t(b)\over \alpha_t(a)}=C\left(\ln{1-c_a\over c_a}+3 \ln{8\over 9}\right)+mD+E,
\end{align*}
where
\begin{align*}
&D:=\sum_{t=1}^3\ln {\alpha_t(b)\over \alpha_t(a)}, \,\quad E:=\sum_{t=n-3}^n\ln{\alpha_t(b)\over \alpha_t(a)}= \ln{1-c_a\over c_a} + \ln{\alpha_{n-2}(b)\over \alpha_{n-2}(a)} + \ln{\alpha_{n-1}(b)\over \alpha_{n-1}(a)}+
 \ln{\alpha_{n}(b)\over \alpha_{n}(a)}.\end{align*}
Observe that $D>0$ and $\ln((1-c_a)/c_a)+3 \ln(8/9)<0$, thus when
$$C<{mD+E\over \ln(c_a/(1-c_a))+3\ln(9/8)}=:C(m),$$
it holds that the Viterbi path is not the hybrid one. Hence
$C_o\geq C(m)$ and since $C(m)$ increases with $m$, so does $C_o$. The computations show that actually $C(m)=C_o$ when $m>8$.\\\\
We would like to emphasize that the observed instability of $C_o$ is due to the specific structure of $x^n$. When generating observation sequences randomly from the same model we can see that typically $C_o(x^n)<1$ up to $n=10000$ (slightly increasing with $n$); for $n=100000$, $C_o(x^n)$ might occasionally exceed 1 and reach up to 3.
\subsection{Dissimilarities of different hybrid paths}\label{sec:example2} In this example we continue to study the  model (\ref{milano}), but now with the following parameters: $p'=p=0.55$, $q'=q=0.8$, $\lambda_1=0.52$, $\lambda_2=0.8$, $\mu_1=0.6$, $\mu_2=0.9$.
To study the behaviour of the random variable $C_o(X^n)$ as well as hybrid paths, we generated 10 realizations $x^n$ of $X^n$ for $n=100,\, 1000,\,10000,\,100000$. For each observation sequence ($4\times 10=40$ sequences) we performed segmentation with PMAP and Viterbi, and we estimated $C_o$. For most cases
$C_o(x^n)\in [12.53,12.55]$, thus the hybrid path corresponding to the block length $k=14$ equals the Viterbi path (for most of the cases). Observe the difference with the previous example, where the same model with another parameter values gave typically much smaller $C_o$. The larger $C_o$ implies that in the present example there is more `space' between the PMAP and Viterbi path and this is due to the very weak dependence  between $X$ and $Y$. The general pattern here is that $C_o(x^n)$ is independent of sequence length; however we also observed that for two studied observation sequences (one of length 10000 and one of length 100000), $C_o(x^n)\in[33.21,33.22]$.  It seems that this behaviour depends on some particular subsequences or pieces of $x^n$ and removing that particular piece of observations would result in $C_o\approx 12.5$.
\\\\
To compare the state path estimates obtained with the PMAP, Viterbi, hybrid and Rabiner $k$-block algorithms, we studied closer the path estimates for the 10 observation sequences of length 100. Recall that in our model both marginal chains have the same transition matrix and the average block length of ones and twos (or $a$-s and $b$-s) is 2.2 and  1.25, respectively. The stationary distribution of $X$ and $Y$ is given by
$\pi=(0.8/(0.45+0.8),0.45/(0.45+0.8))$, thus there are almost twice as many ones expected in our observation sequences as twos.
In Figure \ref{StatePaths}, all the estimated state paths for one observation sequence are presented (in the order from top to down: true underlying state path, PMAP, hybrid block paths for $k=2,\ldots,14$, Rabiner block paths for $k=2,\ldots,14$, Viterbi). For better visibility, we have plotted the first 80 states of the path estimates.
\begin{figure}[ht!]
\centering
\includegraphics[height=10cm, width=17cm]{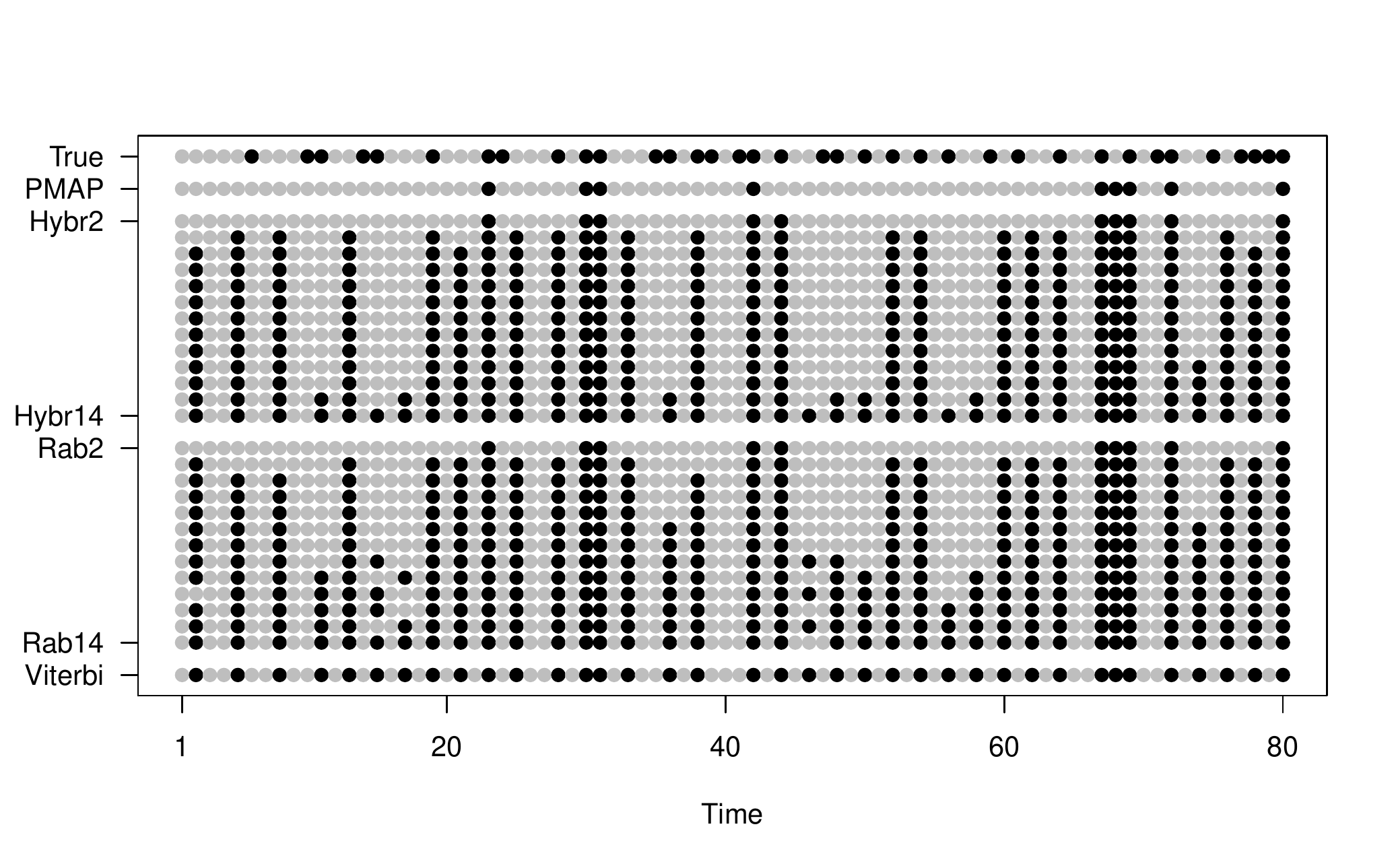}
\vspace{0.1cm}
\caption{\label{StatePaths} Studied state path estimates for one observation sequence in Example \ref{sec:example2}. From top to down: true underlying state path, PMAP, hybrid block paths for $k=2,\ldots,14$, Rabiner block paths for $k=2,\ldots,14$, Viterbi.}
\end{figure}
\noindent
We can see how the pattern changes when we move from PMAP to Viterbi: the number of dominating state $a$ (grey) decreases and the number of state $b$ (black) increases. When we study different block lengths to see how information from different neighbourhoods is accounted for, we can see that a larger change compared to PMAP occurs for $k=3$. It's also interesting to observe that the hybrid block estimates are the same for $k=4,\ldots,10$ (for all the ten observation sequences), thus increasing the block size doesn't change the path estimate for those $k$-values.
\\\\
To get a better overview of the behaviour of the estimated state paths, we present some summary statistics over 100 observation sequences of length $n=100$. For all the observation sequences we estimated the PMAP path, Rabiner and hybrid block paths for $k=2,\ldots,14$, and the Viterbi path. In Table \ref{SummaryHybr} the averages over 100 sequences of classification errors are presented for the PMAP path, hybrid block paths ($k=2,\ldots,14$) and Viterbi path.
%
\begin{table}[htbp!]
\begin{center}
{
\begin{tabular}{| c || c | c | c | }
  \hline
Path     &  Type I  & Type II  &  Errors   \\
  \hline
 PMAP &  2.97 (1.56) & 31.39 (3.91) & 34.36 (4.16) \\
 $k=2$ & 4.11 (2.00) & 30.67 (4.07) & 34.78 (4.22) \\
 $k=3$ & 14.66 (3.08) & 23.88 (3.61) & 38.54 (5.01) \\
 $k=4$ & 16.95 (3.23) & 22.88 (3.69) & 39.83 (4.81) \\
 $k=5$ & 16.95 (3.23) & 22.88 (3.69) & 39.83 (4.81) \\
 $k=6$ & 16.95 (3.23) & 22.88 (3.69) & 39.83 (4.81) \\
 $k=7$ & 16.95 (3.23) & 22.88 (3.69) & 39.83 (4.81) \\
 $k=8$ & 16.95 (3.23) & 22.88 (3.69) & 39.83 (4.81) \\
 $k=9$ & 16.95 (3.23) & 22.88 (3.69) & 39.83 (4.81) \\
 $k=10$ & 16.95 (3.23) & 22.88 (3.69) & 39.83 (4.81) \\
 $k=11$ & 18.12 (3.30) & 22.09 (3.63) & 40.21 (4.98) \\
 $k=12$ & 18.17 (3.28) & 22.07 (3.61) & 40.24 (4.95) \\
 $k=13$ & 23.20 (3.40) & 19.28 (3.40) & 42.48 (5.35) \\
 $k=14$ & 26.03 (3.64) & 17.78 (3.18) & 43.81 (5.58) \\
 Viterbi & 26.05 (3.64) & 17.77 (3.17) & 43.82 (5.58) \\
\hline
\end{tabular}}
\end{center}
\caption{\label{SummaryHybr} \footnotesize Averages of type I errors, type II errors and total number of classification errors over 100 sequences in Example \ref{sec:example2} for the PMAP path, hybrid block paths ($k=2,\ldots,14$) and Viterbi path. In the brackets the corresponding standard deviations are given.}
\end{table}
In this example we can make two types of classification errors: classify $a$ as $b$ (call it type I error) or classify $b$ as $a$ (type II error). To demonstrate further the difference between the PMAP and Viterbi paths, we present also averages of these classification errors separately. As the theory predicts, the number of classification errors increases with $k$. However, there is also a clear dependence between $k$ and error types: when $k$ is small ($k=1,2$) then the number of type I errors for this model is small. When $k$ increases and we move towards Viterbi, then the number of type I errors starts to increase and the number of type II errors decreases.
Notice also that when we compare the average number of pointwise errors for PMAP and Viterbi, then PMAP is about 10$\%$ better when we consider the total number of errors. The major difference between the results of the two algorithms is what type of errors we make.\\\\
{In Table \ref{SummaryRab} the same summary statistics are presented for the Rabiner $k$-block paths. The general behaviour concerning type I and type II errors is similar for the Rabiner $k$-block and hybrid  paths with $C=k-1$. The major difference is that the Rabiner algorithm gives more varying path estimates for $k=4,\ldots,10$, which is reflected in a smoother increase/decrease of the averages of type I/type II errors. In column \textit{Difference} of Table \ref{SummaryRab} the average number of pointwise differences (and its standard deviation) between the hybrid paths and Rabiner block paths is presented. For $k=12$ the average pointwise difference is 8.91 showing that the Rabiner and hybrid block path estimates can be pretty different (recall that $n=100$).}
\begin{table}[htbp!]
\begin{center}
\begin{tabular}{| c || c | c | c | c | }
  \hline
Path     &  Type I  & Type II  &  Errors & Difference  \\
  \hline
 PMAP &  2.97 (1.56) & 31.39 (3.91) & 34.36 (4.16) & na  \\
 $k=2$ & 4.11 (2.00) & 30.67 (4.07) & 34.78 (4.22) & 0  \\
 $k=3$ & 12.69 (3.26) & 25.22 (3.73) & 37.91 (4.53) & 5.11 (2.25) \\
 $k=4$ & 16.86 (3.23) & 23.00 (3.67) & 39.86 (4.92) & 1.09 (1.78) \\
 $k=5$ & 17.22 (3.27) & 22.69 (3.63) & 39.91 (4.81) & 0.50 (0.66) \\
 $k=6$ & 17.36 (3.32) & 22.64 (3.59) & 40.00 (4.84) & 0.95 (1.36) \\
 $k=7$ & 18.12 (3.29) & 22.26 (3.70) & 40.38 (4.93) & 1.93 (1.51) \\
 $k=8$ & 18.79 (3.21) & 21.66 (3.57) & 40.45 (4.88) & 3.34 (1.96) \\
 $k=9$ & 20.18 (3.35) & 20.95 (3.54) & 41.13 (4.97) & 5.30 (2.41) \\
 $k=10$ & 21.46 (3.20) & 20.28 (3.50) & 41.74 (4.98) & 7.47 (2.86) \\
 $k=11$ & 22.56 (3.42) & 19.51 (3.56) & 42.07 (5.36) & 7.78 (2.54) \\
 $k=12$ & 23.50 (3.47) & 19.15 (3.35) & 42.65 (5.35) & 8.91 (2.66) \\
 $k=13$ & 24.00 (3.39) & 18.86 (3.28) & 42.86 (5.33) & 4.52 (2.12) \\
 $k=14$ & 24.48 (3.58) & 18.60 (3.18) & 43.08 (5.47) & 2.59 (1.74) \\
 Viterbi & 26.05 (3.64) & 17.77 (3.17) & 43.82 (5.58) & na \\
\hline
\end{tabular}
\end{center}
\caption{\label{SummaryRab} \footnotesize Averages of type I errors, type II errors and total number of classification errors over 100 sequences in Example \ref{sec:example2} for the PMAP path, Rabiner $k$-block paths ($k=2,\ldots,14$) and Viterbi path. In the brackets the corresponding standard deviations are given. In column \textit{Difference} the average number of pointwise differences (and its standard deviation) between the hybrid block paths and Rabiner block paths is given.}
\end{table}
\subsection{Regime switching model and inadmissible state paths}\label{sec:example3} The main purpose of this example is to demonstrate possible inadmissibility of PMAP paths and that PMAP and Viterbi can give quite similar results in terms of classification errors. Consider a regime switching model with the following parameters:
$$ P_Y = \left(
  \begin{array}{ccc}
    0.95 & 0.05 & 0 \\
    (1-r_B)/2 & r_B & (1-r_B)/2 \\
    0 & 0.01 & 0.99 \\
  \end{array}
\right), \quad \mbox{where} \quad r_B=0.2,0.4,0.6,0.8;
$$
let $\epsilon_1=0.2$, $\epsilon_2=0.3$, $\delta_1=0.4$, $\delta_2=0.1$, thus
\[
P_A=\left(
  \begin{array}{cc}
   0.8  & 0.2 \\
   0.3  & 0.7 \\
  \end{array}
\right),\quad
   P_B= \left(
  \begin{array}{cc}
   0.6  & 0.4 \\
   0.6  & 0.4 \\
  \end{array}
\right),\quad P_C= \left(
  \begin{array}{cc}
   0.4 & 0.6 \\
   0.9 & 0.1 \\
  \end{array}
\right).
\]
Thus, we consider four different values of $r_B$ keeping the rest of the parameters fixed, and study how this affects segmentation results using different state path estimators. For this model (\ref{paramcond}) holds and the proportion of ones and twos in all the regimes is 0.6 and 0.4, respectively. Observe that the expected number of times the underlying chain is in regime B is according to the stationary distribution for cases  $r_B=0.2,0.4,0.6,0.8$ given by 6\%, 8\%, 11\% and 20\%, respectively. For given $r_B$, we generated 100 sequence pairs $(x,y)$ from the corresponding PMM with sequence length $n=1000$, and studied different state path estimates for those sequences. The results of the experiment are summarized in Table \ref{tableRb}. In this example regimes $A$ and $C$ are dominating and regime $B$ occurs, especially for $r_B=0.2$ and $r_B=0.4$, very rarely. Since the block lengths of ones and twos in regime $A$ are longer on average compared to regime $C$, it's quite easy to separate the two regimes based on observations. This means that for smaller  $r_B$ classification should be easier and the simulations confirm it -- we see that the average number of pointwise errors in the case $r_B=0.2$ is 11\% and 12\% for PMAP and Viterbi, for $r_B=0.4$ the corresponding numbers are 13\% and 14\%. When the frequency of regime $B$ increases with increasing $r_B$, the pointwise error rates also increase. For $r_B=0.8$ the error rates of PMAP and Viterbi are 22\% and 25\%. We can also see that the average number of pointwise differences between the PMAP and Viterbi path for $r_B=0.2,0.4,0.6$ is quite small: 40, 41 and 48, respectively. Thus, one could think that the PMAP and Viterbi path estimates are quite similar but this is not the case. The problem with PMAP paths for this model (with $r_B=0.2,0.4,0.6$) is that the path estimates are inadmissible because of the impossible transitions $1\to 3$ and $3 \to 1$. The inadmissibility of PMAP paths is also evident from the low frequencies of regime $B$ in the first 3 rows of column \textit{PMAP}. The average number of inadmissible transitions in the PMAP paths for each $r_B$ is given in column \textit{Inadm(PMAP)}. To exemplify inadmissibility of Rabiner $k$-block paths, the average number of inadmissible transitions is presented also for the Rabiner block paths with $k=2$ and $k=5$. We have also counted the number of admissible PMAP, Rabiner $2$- and $5$-block paths (if any), those numbers are presented in the brackets after the average number of inadmissible transitions. Thus, we can see that for $r_B=0.2$ and $r_B=0.4$, 8 and 4 Rabiner $5$-block paths were admissible, respectively. For $r_B=0.8$, only one PMAP path was inadmissible and there were 6 admissible Rabiner $2$-block paths. The fact that even Rabiner $5$-block paths might be inadmissible is alarming -- the intuition suggests that the longer the blocks, the closer the path is to the Viterbi path, but even the blocks of length 5 cannot guarantee admissibility of Rabiner paths in this example.\\\\
To conclude: since PMAP paths are inadmissible, in this example with $r_B=0.2,0.4,0.6$, one should use a hybrid path or Viterbi path as a hidden path estimate. When the purpose is to minimize the expected number of pointwise errors, the 2-block hybrid path could be used (the average number of pointwise errors is given in column \textit{Err(Hybr2)}) {or} any hybrid path with $C\in(0,1]$ and $B=1$ in (\ref{hybridB}).
%
%
%
\begin{table}[htbp!]
\begin{center}
\begin{tabular}{|c||c|c|c|c|c|c|}
  \hline
$r_B$ & PMAP  & Viterbi  & Err(PMAP) & Err(Vit) & Err(Hybr2) & Diff(PMAP/Vit)\\
  \hline
0.2 &  506/2/493 & 498/15/488 & 109 (20) &  119 (27) & 114 (22) & 40 (17) \\
0.4 &  495/5/500 & 491/15/495 & 127 (21) &  138 (26) & 130 (23) & 41 (15) \\
0.6 &  483/14/503 & 481/16/503 & 155 (29) & 168 (34) & 155 (30) & 48 (18) \\
0.8 &  463/111/426 & 507/34/460 & 225 (39) &  252 (48) & 228 (42) &  114 (34) \\
\hline
\end{tabular}
\vskip1\baselineskip\noindent
\begin{tabular}{|c||c|c|c|}
  \hline
$r_B$ & Inadm(PMAP) & Inadm(Rab2) & Inadm(Rab5) \\
  \hline
0.2 &   18.27 & 11.66 & 2.79 (8) \\
0.4 &   15.41 & 12.22 & 3.83 (4) \\
0.6 &    9.20  & 11.60 & 5.08  \\
0.8 &  0.01 (99)  &   3.48 (6) &  5.64 \\
\hline
\end{tabular}
\end{center}
\caption{\label{tableRb} \footnotesize Different summary statistics for 100 generated sequence pairs $(x^n,y^n)$ of length $n=1000$ from the model in Example \ref{sec:example3} with $r_B=0.2,0.4,0.6,0.8$. In columns \textit{PMAP} and \textit{Viterbi}, average frequencies (over 100 sequences) of regimes $A$, $B$ and $C$ are given for the PMAP and Viterbi path. Columns \textit{Err(PMAP)}, \textit{Err(Vit)} and \textit{Err(Hybr2)} give the average number of pointwise errors (over 100 sequences) of the PMAP, Viterbi and hybrid 2-block path compared to the true path, the corresponding standard deviations are presented in the brackets. Column \textit{Diff(PMAP/Vit)} reports the average number of pointwise differences between the PMAP and Viterbi path for the models with different $r_B$, the corresponding standard deviations are given in the brackets. Columns \textit{Inadm(PMAP)}, \textit{Inadm(Rab2)} and \textit{Inadm(Rab5)} report the average number of inadmissible transitions in the PMAP, Rabiner 2-block and Rabiner 5-block paths; the number of admissible paths if any is reported in the brackets.}
\end{table}
\\\\
We also studied the distribution of $C_o(X^n)$ for different  $r_B$. For each simulated observation sequence we calculated the smallest integer $k_o$ such that for $C\geq k_o$, the hybrid path equals the Viterbi path. Thus $k_o=\lceil C_o \rceil$. Recall that for $k \leq n$ the hybrid path with $C=k-1$ can be interpreted as the hybrid $k$-block path. Table \ref{tableRbCo} presents the summary statistics of the distribution of $k_o$ (over 100 sequences) for each $r_B$. We can see the values of minimum, first quartile, median, third quartile and maximum in each distribution and these indicate how much $k_o$ varies. Observe the difference with the previous example in Subsection \ref{sec:example2} -- the variation of $C_o$ is tremendous and for $r_B=0.4,0.6,0.8$, the maximum value of $k_o$ is much larger than the sequence length 1000 (the number of observation sequences out of 100 for which $k_o$ is larger than 1000 is 3, 3 and 2, respectively). In particular, $k_o$ might be even more than 9000. This contradicts the naive intuition that when $C>n$, then every hybrid path should be the Viterbi one, {because we have reached the maximum block length $n$}.\\\\
\begin{table}[htbp!]
\begin{center}
{
\begin{tabular}{|c||c|c|c|c|c|}
  \hline
$r_B$ & Min & $Q_1$ &  Median  & $Q_3$  &   Max  \\
\hline
0.2 &   2  &    7.0  &   14.0   &   39.75 & 353  \\
0.4 &    3  &  10.0  &  23.0  &   48.5 &  2303   \\
0.6 &    5  &    16.0  &  28.0  &   91.5 & 1782  \\
0.8 &   6 &    28.0 &   52.5 &    99.5 & 9193   \\
\hline
\end{tabular}}
\end{center}
\caption{\label{tableRbCo} The distribution of the smallest integer $k_o$ in Example \ref{sec:example3}, such that for $C\geq k_o$ the hybrid path equals the Viterbi path.}
\end{table}
\section*{Appendix: Alternative parametrization of model \eqref{milano}.}
Consider the related Markov chain model in Subsection \ref{sec:milano}  $P_X=P_Y$. Reparametrize the transition matrix
\[
\mathbb{P} =
\bordermatrix{ ~ & (1,a) & (1,b) & (2,a) & (2,b)\cr
     (1,a) & p\lambda_1 & p(1-\lambda_1) & p(1-\lambda_1) & 1+p\lambda_1 -2p\cr
     (1,b) & p\lambda_2 & p(1-\lambda_2) & q-p\lambda_2 & 1+p \lambda_2 -q-p\cr
     (2,a) & q\mu_1 & q(1-\mu_1) & p-q\mu_1 & 1+q\mu_1-p-q \cr
     (2,b) & q\mu_2 & q(1-\mu_2) & q(1-\mu_2) & 1+q\mu_2-2q\cr
    }
\]
as follows:
\[
\mathbb{P} =
\bordermatrix{ ~ & (1,a) & (1,b) & (2,a) & (2,b)\cr
     (1,a) & p\lambda_1 & p(1-\lambda_1) & (1-p) \theta_1 & (1-p)(1-\theta_1) \cr
     (1,b) & p\lambda_2 & p(1-\lambda_2) & (1-p) \theta_2 & (1-p)(1-\theta_2) \cr
     (2,a) & q\mu_1 & q(1-\mu_1) & (1-q)\rho_1 & (1-q)(1-\rho_1) \cr
     (2,b) & q\mu_2 & q(1-\mu_2) & (1-q)\rho_2 & (1-q)(1-\rho_2) \cr
    },
\]
where
\[ \theta_1= {p(1-\lambda_1)\over 1-p},\quad \theta_2={q-p\lambda_2 \over 1-p}, \quad \rho_1={p-q\mu_1 \over 1-q},\quad \rho_2={q(1-\mu_2)\over 1-q}. \]
Thus, for given $p$ and $q$, the new parameters $\theta_1$, $\theta_2$, $\rho_1$ and $\rho_2$ are functions of $\lambda_1$, $\lambda_2$, $\mu_1$ and  $\mu_2$, respectively. The parameters represent the following probabilities:
\[ P(Y_2=a|Y_1=a,X_1=1,X_2=1)=\lambda_1,\quad P(Y_2=b|Y_1=a,X_1=1,X_2=1)=1-\lambda_1,  \]
\[ P(Y_2=a|Y_1=b,X_1=1,X_2=1)=\lambda_2,\quad P(Y_2=b|Y_1=b,X_1=1,X_2=1)=1-\lambda_2,  \]
\[ P(Y_2=a|Y_1=a,X_1=1,X_2=2)=\theta_1,\quad P(Y_2=b|Y_1=a,X_1=1,X_2=2)=1-\theta_1,  \]
\[ P(Y_2=a|Y_1=b,X_1=1,X_2=2)=\theta_2,\quad P(Y_2=b|Y_1=b,X_1=1,X_2=2)=1-\theta_2,  \]
\[ P(Y_2=a|Y_1=a,X_1=2,X_2=1)=\mu_1,\quad P(Y_2=b|Y_1=a,X_1=2,X_2=1)=1-\mu_1,  \]
\[ P(Y_2=a|Y_1=b,X_1=2,X_2=1)=\mu_2,\quad P(Y_2=b|Y_1=b,X_1=2,X_2=1)=1-\mu_2,  \]
\[ P(Y_2=a|Y_1=a,X_1=2,X_2=2)=\rho_1,\quad P(Y_2=b|Y_1=a,X_1=2,X_2=2)=1-\rho_1,  \]
\[ P(Y_2=a|Y_1=b,X_1=2,X_2=2)=\rho_2,\quad P(Y_2=b|Y_1=b,X_1=2,X_2=2)=1-\rho_2.  \]
Thus, given an observation sequence $x^n$, the probability of any state sequence $y^n$ is determined by the initial distribution and the transition probabilities above (representing four transition matrices).
\subsection*{Acknowledgments} This work is supported by the Estonian Research Council grant PRG865.
\subsection*{Conflict of interest} The authors have no conflicts of interest to declare that are relevant to the content of this article.
\bibliographystyle{apa}

\begin{thebibliography}{999}

\bibitem{P03}
W. Pieczynski,
\newblock {Pairwise Markov chains},
\newblock {\em  IEEE Transactions on Pattern Analysis and Machine
Intelligence}, Vol. 25, no. 5, pp. 634--639, 2003

\bibitem{P04}
S. Derrode, W. Pieczynski,
\newblock {Signal and image segmentation using pairwise Markov chains},
\newblock {\em  IEEE Transactions on Signal Processing}, Vol. 52, no. 9, pp. 2477--2489, 2004

\bibitem{P04b}
P. Lanchantin, W. Pieczynski,
\newblock{Unsupervised non stationary image segmentation using triplet Markov chains},
\newblock{\em Advanced Concepts for Intelligent Vision Systems (ACVIS 04), August 31–September 3, Brussels, Belgium}, 2004

\bibitem{Pzebra}
P. Lanchantin, J. Lapuyade-Lahorgue, W. Pieczynski,
\newblock {Unsupervised segmentation of randomly switching data hidden with non-Gaussian correlated noise},
\newblock {\em Signal Processing}, Vol. 91, pp. 163--175, 2011


\bibitem{P12}
M. Boudaren, E. Monfrini, W. Pieczynski,
\newblock{Unsupervised segmentation of random discrete data hidden with switching noise distributions},
\newblock {\em  IEEE  Signal Processing Letters}, Vol. 19, no. 10, pp. 619--622, 2012

\bibitem{gorynin18}
I. Gorynin, H. Gangloff, E. Monfrini, W. Pieczynski,
\newblock {Assessing the segmentation performance of pairwise and triplet Markov models},
\newblock {\em  Signal Processing}, Vol. 145, pp. 183--192, 2018

\bibitem{P07}
D. Benboudjema, W. Pieczynski,
\newblock {Unsupervised statistical segmentation of nonstationary images using triplet Markov fields},
\newblock {\em  IEEE Transactions on Pattern Analysis and Machine
Intelligence}, Vol. 29, no. 8, pp. 1367--1378, 2007


\bibitem{Psemi}
J. Lapuyade-Lahorgue, W. Pieczynski,
\newblock {Unsupervised segmentation of hidden semi-Markov non-stationary chains}
\newblock {\em  Signal Processing}, Vol. 92, pp. 29 -- 42, 2012


\bibitem{P13}
S. Derrode, W. Pieczynski,
\newblock {Unsupervised data classification using pairwise Markov chains with automatic copulas selection},
\newblock {\em Computational Statistics and Data Analysis}, Vol. 63,  pp. 81--98, 2013


\bibitem{koski}
T. Koski,
\newblock {Hidden Markov models for bioinformatics},
Kluwer, 2001

\bibitem{HMM-book}
O. Capp\'{e}, E. Moulines, T. Ryd\'{e}n,
\newblock {Inference in hidden Markov models},
Springer, 2005

\bibitem{seg}
J. Lember, A. Koloydenko,
\newblock {Bridging Viterbi and posterior decoding: a generalized risk approach
to hidden path inference based on hidden Markov models},
\newblock  {\em Journal of Machine Learning Research}, Vol. 15, pp. 1--58, 2014.

\bibitem{intech}
J. Lember, K. Kuljus, A. Koloydenko,
\newblock {Theory of segmentation}
\newblock  {\em In: Hidden Markov Models, Theory and Applications (Ed: P. Dymarski)},
InTech, 2011.

\bibitem{peep}
K. Kuljus, J. Lember,
\newblock {On the accuracy of the MAP inference in HMMs},
\newblock {Methodology and Computing in Applied Probability}, Vol 18, no. 3, pp. 597--627, 2016

\bibitem{chris}
C. Holmes, C. Yau,
\newblock{A decision-theoretic approach for segmental classification},
\newblock{\em The Annals of Applied Statistics},
Vol. 7, no. 3, pp. 1814-1835, 2013.

\bibitem{rabiner}
L. Rabiner, \newblock{A tutorial on hidden Markov models and
selected applications in speech recognition},
\newblock{\em Proceedings of the IEEE}, Vol. 77, no. 2, pp. 257--286, 1989.


\bibitem{zucca}
J. Lember, H. Matzinger, J. Sova, F. Zucca,
\newblock {Lower bounds for moments of global scores of pairwise Markov chains},
\newblock{\em Stochastic Processes and their Applications}, Vol 128, no. 5,
1678-1710, 2018.

\bibitem{sova1}
J. Lember, J. Sova,
\newblock{Existence of infinite Viterbi path for pairwise Markov models},
\newblock{\em Stochastic Processes and their Applications}, Vol 130, no. 3,
1388-1425, 2020

\bibitem{sova2}
J. Lember, J. Sova,
\newblock{Regenerativity of Viterbi process for pairwise Markov models},
\newblock{\em Journal of Theoretical Probability}
, Vol 34 (1), 1 -- 33, 2021

\bibitem{sova3}
J. Lember, J. Sova,
\newblock{Exponential forgetting of smoothing distributions for pairwise Markov models},
\newblock{\em  Electronic Journal of Probability}, Vol 26, 1 -- 30, 2021


\bibitem{avans}
K. Avans,
\newblock{Paarikaupa Markovi mudel: definitsioon ja n\"{a}ited},
\newblock{Master's thesis}, University of Tartu, 2021.

\end{thebibliography}

\end{document}